\documentclass[conference,letterpaper]{IEEEtran}
\usepackage{mathtools,amssymb,amsthm}
\usepackage{graphicx}
\usepackage{flushend}
\usepackage{tikz}
\usetikzlibrary{arrows.meta,positioning}
\usepackage{url}
\usepackage[hidelinks]{hyperref}
\usepackage{xurl}
\newtheorem{theorem}{Theorem}[section]
\newtheorem{lemma}[theorem]{Lemma}
\newtheorem{proposition}[theorem]{Proposition}
\newtheorem{corollary}[theorem]{Corollary}
\theoremstyle{definition}
\newtheorem{definition}[theorem]{Definition}
\newcommand{\Z}{\mathbb Z}

\newcommand{\F}{\mathbb F}
\newcommand{\rt}{\mathrm{route}}
\newcommand{\sym}{\mathrm{sym}}
\newcommand{\lin}{\mathrm{lin}}
\newcommand{\dist}{\operatorname{dist}}

\newcommand{\Id}{\mathrm I}
\newcommand{\cl}{\operatorname{cl}}
\title{Network Coding Can Beat Routing\\ in Undirected Multiple-Unicast Networks\\[2pt] {\Large A Formally Verified Counterexample to the Li--Li Conjecture}}
\author{\IEEEauthorblockN{Xindan Zhang\IEEEauthorrefmark{1}, Baochun Li\IEEEauthorrefmark{1}, Zongpeng Li\IEEEauthorrefmark{2}}
\IEEEauthorblockA{\IEEEauthorrefmark{1}University of Toronto\\
\IEEEauthorrefmark{2}Tsinghua University}}
\hypersetup{pdftitle={Network Coding Can Beat Routing in Undirected Multiple-Unicast Networks: A Formally Verified Counterexample to the Li-Li Conjecture},pdfauthor={Xindan Zhang, Baochun Li, Zongpeng Li},pdfsubject={Undirected multiple-unicast network coding},pdfkeywords={network coding, fractional routing, finite abelian groups, high girth, bounded degree, formal verification}}
\begin{document}
\bstctlcite{IEEEexample:BSTcontrol}
\maketitle
\begin{abstract}
Network coding lets the nodes of a network combine messages rather than merely forward them. In undirected networks, where the two directions of an edge share its capacity, Li and Li conjectured in 2004 that coding offers no advantage over fractional routing; the conjecture has been confirmed for many special classes but never settled in general. In this paper, we disprove it by constructing a finite connected simple undirected network with unit shared edge capacities, maximum degree three, and distinct leaf terminals, on which a binary linear block code achieves a common rate strictly above the maximum fractional routing rate. The construction turns a short completion-time code into a reversible circuit whose registers all carry independent messages, and a temporal metric on its wires yields the strict routing bound. The underlying integer-coefficient construction works over every finite field and every nontrivial finite abelian group; for deterministic fixed-schedule codes the common-rate supremum is one and the closure of the rate region is the unit cube. Amplifying the gap with a degree-preserving tensor construction and an even subdivision gives, for every $0<\beta<1$, an unbounded family of connected, simple, subcubic, bipartite networks with girth at least $n^\beta$ on which coding approaches rate one while routing is bounded by $K_\beta/(\log n)^c$, with one positive exponent $c$ independent of $\beta$. The finite counterexample and the family theorem are formalized in Lean.
\end{abstract}
\begin{IEEEkeywords}
network coding, multiple unicast, fractional routing, finite abelian groups, bounded degree, graph girth
\end{IEEEkeywords}

\section{Introduction}\label{sec:introduction}
\begin{figure*}[t]
\centering
\input{figures/overview.tikz}
\caption{An overview of the construction. The left panel shows the base gadget of the auxiliary network, with three terminal pairs drawn; recursive composition produces the instance $H$ that is compiled into the reversible circuit $C$, whose wires become the physical edges of the wire graph $W$, with one unicast demand per register. The circuit and the wire graph are schematic, with most gates omitted. In $C$, a dot marks the control and a square the target of a gate; in $W$, drawn after the gate splitting of Section~\ref{sec:split}, each gate is two vertices joined by its internal edge, and the dashed curved arrows indicate demand pairs, not physical edges. Giving every register a demand balances the total temporal wire length $N(M+1)$ against the baseline demand distances. The metric penalty of two at the control ports of the $g_F$ cut-edge gates adds $4g_F$ to the edge budget $Q_0$, while the $k$ designated demands add $2k$ to the demand bound $H_0$; hence $Q_0<H_0$, every fractional routing rate is at most $Q_0/H_0<1$, and the one-use code pipelines toward rate one. Component selection and metric normalization produce the seed; self-tensoring squares its routing ceiling, and even subdivision preserves the ceiling while making the networks bipartite with polynomial girth.}
\label{fig:overview}
\end{figure*}
In an undirected network, an edge can carry a message in either direction, but its two directions share one capacity budget. With several independent source--sink pairs, routing sends each message along paths, and fractional routing may split each session's traffic among all available paths with arbitrary nonnegative real weights. Network coding, in addition, allows intermediate vertices to combine the messages they receive~\cite{ACLY2000}. Whether this additional freedom can raise the achievable rates is the question of this paper.

In 2004, Li and Li conjectured that it cannot: every rate vector achievable by network coding is also achievable by fractional routing~\cite{LiLi2004,LiLiCISS2004}. Harvey, Kleinberg, and Rasala Lehman formulated the same conjecture independently~\cite{HKL2004}. Over the two decades since, it has been confirmed for one class of networks after another. It holds wherever cuts already determine the routing rate, as for two sessions and for planar networks with all terminals on one face~\cite{Hu1963,OS1981}. It holds for the Okamura--Seymour example~\cite{OS1981}, settled concurrently by Harvey, Kleinberg, and Rasala Lehman, by Jain, Vazirani, and Yuval, and by Kramer and Savari~\cite{HKL2006,JVY2006,KS2006}, and for bipartite families generalizing it~\cite{HKL2006,JVY2006}. It also holds for further bipartite classes~\cite{AY2008}, for classes on which a partition bound is tight~\cite{TQ2019,QT2022}, for networks with at most six coding nodes~\cite{Yin2018}, and for networks with few terminal locations, a restricted planar structure, or further topological restrictions~\cite{Liu2026,LiuINFOCOM2026}. Xiahou, Li, Wu, and Huang recast the problem geometrically~\cite{Xiahou2014}, Langberg and M\'edard relate the problem to multicast: when coding lets every source multicast to all sinks at rate $r$, routing serves the unicast demands at rate $r/3$~\cite{LM2009}; and Liu, Li, Fan, and Chen reduce the conjecture to the independence of irreducible cores~\cite{Liu2026sif}. The conjecture has also served as a hypothesis: lower bounds for external-memory integer sorting and for multiplication circuits were proved under it~\cite{FHLS2019,AFKL2019}.

But is the conjecture true? What the general theory provides is an upper bound. The coding rate never exceeds the sparsity of the network, the minimum ratio of cut capacity to separated demand over cuts with positive separated demand~\cite{HKL2006}, and for $k\geq2$ sessions the maximum concurrent flow is within a factor $O(\log k)$ of sparsity~\cite{LLR1995,AR1998}. Coding can therefore beat routing by at most a logarithmic factor. For multi-source multicast in undirected networks, a coding advantage of $\Omega(\log k)$ was shown only recently, and its authors note that for unicast sessions it was not even known whether coding helps at all~\cite{BH2025}. Braverman, Garg, and Schvartzman (BGS) showed that if any undirected network admits a strict coding advantage, a tensor-product construction amplifies it into a polylogarithmic one~\cite{BGS2017}. Whether coding ever helps in undirected networks, and whether it then helps polylogarithmically, therefore comes down to one finite instance.

In this paper, we construct that instance, and the Li--Li conjecture is false. We present a finite connected simple undirected network with unit shared edge capacities, maximum degree three, and pairwise distinct leaf terminals, on which a binary linear block code delivers every session at a common rate strictly above the best possible fractional routing rate. The separation is established at a finite block length, before any capacity limit is taken, and it is certified by a distance argument that bounds every fractional routing over all simple paths with all nonnegative real weights. The network therefore lies outside each of the classes listed above.

Our starting point is a separation for a different objective. Haeupler, Wajc, and Zuzic (HWZ) constructed networks with a gap between coding and routing completion times: their examples combine a short code with long source--sink distances once a small set of edges, the cut set, is removed~\cite{HWZ2020}. A gap in completion time, however, does not by itself give a throughput separation, because a long path can still carry a pipeline at high rate. Our key idea is to turn the short code into a \emph{reversible computation}, and to build a new network whose physical edges are the wires of that computation (Fig.~\ref{fig:overview}). Every register of the computation, including every workspace register, becomes an independent unicast demand; the computation exchanges the two logical banks, up to a sign removed locally, and restores every workspace register to whatever value it started with.

Why does making every register a demand help? Assign each wire the span of circuit time that it bridges. Summed over all wires, these spans equal exactly the sum of the baseline distances of all demands, where the baseline is one full circuit duration, the least that any input-to-output route can cost. The $k$ designated demands, those carrying the source messages to the sinks, cannot be served at the baseline: a route that only moves forward in time projects to a walk in the auxiliary graph that is too short to avoid the cut set, so it must enter or leave, through its control port, a gate that simulates a transmission on a cut edge. We charge two units at every such port; a backward step already costs at least two more by itself. The charges add at most $48Tf$ to the wire budget, where $T$ is the duration of the auxiliary code and $f$ the size of its cut set, while the designated demands add $2k$ to the demand distance; the auxiliary code is chosen with $k>24Tf$, so every common routing rate is strictly below one. After the gates are split and the metric is normalized, the code uses each physical edge exactly once per execution, so independent executions can start one round apart, and a sufficiently long but finite pipeline attains a rate above the routing bound.

The construction is defined over the integers, and the same integer instructions can be interpreted in every finite field and every nontrivial finite abelian group. The same graph, schedule, and coefficients therefore separate coding from routing over all of these alphabets at once; over fields, the codes are linear. To take the separation a step further, we ask how large the coding advantage can become, and under which structural restrictions it survives. We do not use the amplification theorem of BGS as a black box: we specialize their tensor construction so that the integer code, the degree bound, and the leaf terminals survive each iteration, and we prove a quantitative bound on the size of the incidence graphs that drive it. For each fixed $0<\beta<1$, a final even subdivision, whose length depends on $\beta$, makes every network bipartite with girth at least $n^\beta$, where $n$ is the number of vertices. A longer subdivision even yields a single family with girth $n^{1-o(1)}$, at the cost of halving the exponent (Corollary~\ref{cor:nearlinear}).

Highlights of our original contributions in this paper are three-fold. \emph{First}, we give the first counterexample to the Li--Li conjecture: a complete finite construction of the graph, a causal binary linear block code, and a strict upper bound on fractional routing, with the separation established at a finite block length (Theorem~\ref{thm:finite}).
\emph{Second}, we show that the separation does not depend on the alphabet, and we determine the coding capacity of the counterexample and of every network in the families below: for deterministic fixed-schedule codes, the common-rate supremum is one, and the closure of the vector-rate region is the unit cube (Proposition~\ref{prop:capacity}). \emph{Finally}, for every $0<\beta<1$ we construct an unbounded family of connected, simple, subcubic, bipartite networks with distinct leaf terminals and girth at least $n^\beta$, on which routing is bounded by $K_\beta/(\log n)^c$ with a single positive exponent $c$ independent of $\beta$. The graphs and integer codes are fixed before any alphabet is chosen, so coding approaches rate one over every finite field and every nontrivial finite abelian group (Theorem~\ref{thm:main}).

Both theorems are formalized in Lean~\cite{Lean4} using Mathlib~\cite{Mathlib2020}.\footnote{The Lean proofs are available at \url{https://github.com/iqua/li-li-conjecture}. OpenAI's GPT-6 Astra assisted in developing the proofs and the Lean code.} The development relies on Lean's standard axioms only, and the final appendix explains how its organization mirrors the proof.

\section{Background and Problem Formulation}\label{sec:model}
This section fixes the two communication models that the theorems compare and the length argument that bounds routing. A network consists of a finite undirected graph $G=(V,E)$ and $k>0$ ordered pairs $(s_i,t_i)$ with $s_i\ne t_i$. Some intermediate graphs have parallel edges; the final graphs are simple. Each physical edge has unit capacity shared between its two directions. Vertices may use different incident edges simultaneously; local storage and computation are free. A graph is subcubic if every vertex has degree at most three. Its girth is the length of its shortest cycle, taken as infinite when there is no cycle. All unqualified logarithms are natural.

\subsection{Coding protocols and normalized rates}
Fix a finite alphabet $A$ with at least two symbols. Session $i$ initially holds an arbitrary message $X_i\in A^{B_i}$ at $s_i$. The messages vary independently: correctness is required for every tuple of their values. A fixed finite schedule specifies events, each a directed use of a physical edge at a round in $\{0,\ldots,D-1\}$. No two events use the same physical edge in the same round. A transmitted symbol is a deterministic function of the sender's own message coordinates and symbols received in strictly earlier rounds. A sink must recover its message exactly from its local information. The schedule, including its directions and idle slots, is independent of the messages.

The achieved rate vector is $(B_1/D,\ldots,B_k/D)$ for $D>0$. The set of such vectors is $\mathcal R_\sym(G;A)$. Zero-length messages are allowed. If $A=\F$ is a finite field, $\mathcal R_\lin(G;\F)$ restricts every local encoding and decoding function to an $\F$-linear map. These maps may use arbitrary coefficients in $\F$. A common rate has $B_i=B$ for every session; its supremum is the common-rate capacity in the specified class.

The normalization is by alphabet symbols. Choose each input symbol uniformly and independently and measure entropy in units of $\log_2|A|$ bits. Session $i$ then has source entropy $B_i$. Order the events by round, breaking ties arbitrarily. Every predecessor used by an event is in a strictly earlier round, so this order gives a legal causal transcript. If an edge has $d_+$ scheduled uses in one direction and $d_-$ in the other, its two directional transcript entropies are at most $d_+$ and $d_-$, and $d_++d_-\leq D$. Dividing all entropies by $D$ therefore gives coding-feasible rates under the usual shared-capacity undirected model~\cite{LiLi2004,HKL2006,BGS2017}. Repeating a finite protocol on fresh independent blocks also gives its stated long-run rate.

\begin{definition}\label{def:integer}
An integer-coefficient protocol specifies each transmitted symbol and each decoded coordinate as an integer linear combination of locally available input coordinates and strictly earlier received symbols. Its correctness is an identity of integer coefficient rows.
\end{definition}

If $A$ is an additive abelian group, interpret an integer coefficient by repeated addition and negation. An identity over $\Z$ then holds in $A$. Over a field, interpret integers through the canonical map $\Z\to\F$; every local function is $\F$-linear. This interpretation concerns the local formulas, not only the final transfer matrix. It therefore preserves causality and locality as well as exact decoding.

\subsection{Fractional routing and distance certificates}
For each pair let $\mathcal P_i$ be the set of all simple undirected paths from $s_i$ to $t_i$. A common rate $r\geq0$ is feasible if there are real numbers $x_{i,p}\geq0$ such that
\begin{equation}\label{eq:routing}
 \sum_{p\in\mathcal P_i}x_{i,p}\geq r,
 \qquad
 \sum_i\sum_{\substack{p\in\mathcal P_i\\e\in p}}x_{i,p}\leq1
 \quad(e\in E).
\end{equation}
The second constraint counts both directions together. A simple path visits no vertex twice and therefore has at most $|V|-1$ edges. Since the vertex and edge sets are finite, every $\mathcal P_i$ is finite. Each path has at least one edge because $s_i\ne t_i$, so its weight is at most one under the load constraints. The feasible path-weight vectors form a closed bounded subset of a finite-dimensional real space. The largest common rate therefore exists; denote it by $R_\rt(G)$.

Allowing walks gives the same maximum. Whenever a walk repeats a vertex, delete the portion between two consecutive occurrences of that vertex. This shortens the walk and increases no physical-edge multiplicity. Repeating the operation terminates at a simple path with the same endpoints. For arbitrary nonnegative real walk weights with well-defined loads, aggregate the weights of walks that reduce to the same simple path. Nonnegativity allows this regrouping, each session's total weight is preserved, and no edge load increases. Thus the complete simple-path formulation captures ordinary fractional routing.

Suppose $\ell_e\geq0$ and every path of session $i$ has $\ell$-length at least $h_i\geq0$. Finite summation gives
\begin{equation}\label{eq:lengthbound}
 r\sum_i h_i
 \leq\sum_{i,p}x_{i,p}\sum_{e\in p}\ell_e
 =\sum_e\ell_e\operatorname{load}(e)
 \leq\sum_e\ell_e.
\end{equation}
In particular, call a walk from a source $s_i$ to its own sink $t_i$ an own-pair walk; if every own-pair walk has at least $b>0$ edges, then
\begin{equation}\label{eq:hopceiling}
 R_\rt(G)\leq\frac{|E|}{kb}.
\end{equation}
All subsequent routing arguments bound the lengths of arbitrary undirected walks from below. No direction or timing of a coding schedule restricts the path universe in~\eqref{eq:routing}.

\section{A Finite Counterexample}\label{sec:core}
We first prove the finite separation on which the rest of the paper builds.

\begin{theorem}[Finite counterexample]\label{thm:finite}
There is a finite connected simple undirected network $G$ with unit shared edge capacities, maximum degree three, and pairwise distinct leaf terminals, and there are positive integers $B,D$, such that an exact causal binary linear code delivers $B$ symbols per session in $D$ rounds and
\begin{equation}\label{eq:finitecounterexample}
 \frac{B}{D}>R_\rt(G).
\end{equation}
\end{theorem}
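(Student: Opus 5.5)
The plan follows the route sketched in the introduction. The starting point is an auxiliary instance $H$ from the Haeupler--Wajc--Zuzic completion-time construction. It has $k$ source--sink pairs, a binary linear code of duration $T$, and a cut set $F$ of $f$ edges. Outside $F$, every own-pair walk is long compared with $T$. We choose the parameters so that $k>24Tf$, composing the base gadget recursively if a single gadget is not enough. The code of $H$ is compiled into an integer reversible circuit $C$ with two logical banks and workspace registers. Each step is a gate of the form $y\mathrel{+}=x$, with control $x$ and target $y$, and it simulates one transmission of the code. The circuit runs the code, swaps the banks up to a sign, and then uncomputes, so every workspace register is restored. Since every gate is an integer elementary row operation, correctness is an identity of integer coefficient rows in the sense of Definition~\ref{def:integer}, and it therefore holds over $\F_2$.

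\textbf{Step 1: build the network.} The wire graph $W$ is formed as follows. Each gate is split into two vertices, a control vertex and a target vertex, joined by an internal edge. Consecutive gates acting on a register are joined by wire edges. Each register receives an input leaf and an output leaf, and each register is one unicast demand. This gives degree at most three and distinct leaf terminals, and simplicity and connectivity are then checked directly. The code executes the circuit on the wires: register $j$'s message enters at its input leaf and exits at its output leaf. Each physical edge is used exactly once per execution, forward in time, after the splitting and schedule normalization. Hence executions can be pipelined one round apart. With $P$ executions over $D=P+L$ rounds, where $L$ is the latency, we get $B/D=P/(P+L)\to1$, and this is linear over $\F_2$.

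\textbf{Step 2: bound routing.} We apply the length bound~\eqref{eq:lengthbound} with the temporal metric: each wire gets length equal to the span of circuit time it bridges, with a penalty of $2$ on the control ports of the $g_F$ gates that simulate cut-edge transmissions. Without penalties, the total length is $N(M+1)$, which exactly equals the sum of the baseline distances, namely one circuit duration per demand. With penalties, the total becomes $Q_0=N(M+1)+4g_F$, where $g_F\le 12Tf$ is counted from the compilation. For the lower bound $h_i$ on walks:
\begin{itemize}
\item Any walk for a nondesignated demand has length at least the baseline, because time must advance from the input layer to the output layer and each backward step only adds length.
\item For a designated demand, a walk that only moves forward in time projects to a walk in $H$ of length at most about $T$, so it cannot avoid $F$. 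It must therefore pass through a penalized control port, paying $+2$.
\item A walk that goes backward in time pays at least $+2$ from that step alone.
\end{itemize}
This gives $H_0=N(M+1)+2k$, and $k>24Tf$ yields $Q_0<H_0$, so $R_\rt(W)\le Q_0/H_0<1$.

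\textbf{Step 3: conclude.} We choose $P$ large enough that $P/(P+L)>Q_0/H_0$, which gives~\eqref{eq:finitecounterexample}.

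The main obstacle is Step 2's projection argument. One must show rigorously that any undirected walk in $W$ that avoids penalized ports and never decreases time projects to a walk in $H$ from $s_i$ to $t_i$ of short length avoiding $F$, contradicting the HWZ distance property. This requires care with the gate splitting, because the internal edges let a walk jump between registers without advancing time. The metric has to be normalized so that every such jump through a non-cut gate corresponds to an edge of $H$ and every return costs. I would first define a potential equal to circuit time, prove that each edge changes it by at most its length, and then analyze register changes case by case.
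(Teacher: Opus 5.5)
Your architecture is the paper's, and one deviation is fine. For Theorem~\ref{thm:finite} alone you may apply~\eqref{eq:lengthbound} directly with the weighted lengths (internal edges of length zero). That lets you skip the positivity rescaling and hop normalization, which the paper needs only to obtain a scalar instance for tensoring. However, there are genuine gaps in the two load-bearing steps.

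\emph{The reversible circuit.} Every workspace register carries an independent message, so the circuit must be a signed permutation for \emph{arbitrary} workspace contents $z$, not only $z=0$. The Bennett-style run--copy--uncompute you describe does not achieve this. Loading $s$ into the proxies, running $U$, adding the decoded values into $Y$, and uncomputing leaves $(z,s,y+s+LUz)$, so the garbage $LUz$ corrupts the designated outputs. Moreover, a true swap needs $Y$ carried back to the sources, and no run of $U$ or $U^{-1}$ does that, since every cross gate still adds sender into receiver. The paper instead uses the commutator shear $P=\Delta^{-1}\Theta^{-1}\Delta\Theta$ with $\Theta=UVU^{-1}$. It maps $(z,s,y)\mapsto(z,s,y+s)$ for every $z$, because the final $\Delta^{-1}$ cancels $Lz$. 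The paper then realizes $P^{-\mathsf T}$ by exchanging control and target in every gate, which gives $(z,s,y)\mapsto(z,s-y,y)$, and executes $P,P^{-\mathsf T},P$. These twelve runs of $U^{\pm1}$ are where the constants $12T$, $24T$ and $24Tf$ come from.

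\emph{The projection bound you flag as the main obstacle.} This needs an idea that a time potential plus local case analysis cannot supply: the gate \emph{order} must be arranged so that each auxiliary round's cross gates form one contiguous, register-disjoint phase. That means fresh private sender and receiver registers per event, with all local preparation placed between phases. A forward path then meets at most one gate per phase, because two consecutive visited gates in one phase would share the register of the wire joining them. Each passage projects to at most two auxiliary hops, so the projection has at most $2L_C\leq24T$ hops. Without this layout, a forward walk can cross several auxiliary edges within one round by threading through interleaved local gates. You also need $\dist_{H\setminus F}(s_i,t_i)>24T$ as an explicit second margin, not just ``long compared with $T$.''

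Two smaller points. Your worry about internal edges disappears if you prove the bound on the unsplit wire graph, where the four port combinations at a gate already cover every register switch, and then give internal edges length zero so that contraction preserves lengths. Connectivity is not automatic, since the split graph may have several components. You must select a component with positive surplus (components without designated demands have surplus at most zero by telescoping) and restrict the code to it, as in Lemma~\ref{lem:connected}.
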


The construction obtains a short auxiliary code, implements it reversibly, turns every register into a unicast demand, and normalizes a strict routing certificate. We then give an explicit finite block length proving~\eqref{eq:finitecounterexample}. The following representation records the graph and scalar code produced along the way.

\begin{definition}[Scalar instance]\label{def:scalar}
A scalar instance consists of a connected simple undirected graph $G$, positive integers $k,m,b,T$ with $m=|E(G)|$, and $k$ source--sink pairs, such that:
\begin{enumerate}
\item the graph has maximum degree at most three, the $2k$ terminals are distinct leaves, and every edge has at least one nonterminal endpoint;
\item every own-pair undirected walk has at least $b$ edges;
\item there is an exact $T$-round integer scalar code, with one input symbol per session, whose events use each physical edge exactly once.
\end{enumerate}
Each edge is given the reference direction of its unique scalar-code event. Write $\rho(G)=m/(kb)$ for its certified routing ceiling.
\end{definition}

The nonterminal-endpoint condition distinguishes edges from different inner copies after tensoring. It is easy to obtain by subdivision. The one-use condition is stronger than a per-round capacity constraint; it prevents collisions when scalar executions are started one round apart.

Our first target is a scalar instance with $\rho<1$. Its one-use code will support a finite pipeline whose rate exceeds $\rho$. We construct the graph, integer code, and strict distance certificate together, then finish the proof of Theorem~\ref{thm:finite} in Section~\ref{sec:finite-pipeline}.

\subsection{The auxiliary completion-time code}\label{sec:input}
The seed construction needs a finite auxiliary graph $H$ with many demands, a small edge set $F$, and a code that is short compared with the own-pair distances after deleting $F$. We call $F$ the cut set and its elements cut edges, although $F$ need not disconnect the graph; the other edges are noncut.

The recursion uses an incidence graph, the cover defined next, to specify how copies are attached. Its local colors assign one inner session to each outer edge direction, and its girth prevents a short route from bypassing those sessions. The covers are the colored bipartite graphs of BGS and HWZ~\cite{BGS2017,HWZ2020}; Appendix~\ref{app:covers} constructs them and proves a size bound.

\begin{definition}\label{def:cover}
A complementary colored cover with parameters $(d,k)$ is a finite simple bipartite graph with left and right vertex sets $V_L,V_R$, left types in $[k]$, and right types in $[d]$. At each left vertex, exactly one incident edge leads to each right type. At each right vertex, exactly one incident edge leads to each left type. An incidence has colors $(a,i)$ given by its right and left types.
\end{definition}

Call the left vertices of type $i$ the left fiber of $i$, and the right vertices of type $a$ the right fiber of $a$. Each $(a,i)$ matching is a bijection between these two fibers, so when all fibers are nonempty they have one common size $\phi$. Thus
\begin{equation}\label{eq:covercounts}
 |V_L|=k\phi,\qquad |V_R|=d\phi.
\end{equation}
A nonempty connected component of any such cover contains all types because $d,k>0$, so it retains these conditions.

\begin{lemma}[Bounded connected cover]\label{lem:cover}
For $d,k>0$ and an integer $g\geq0$, there is a connected complementary colored cover of girth at least $g$ with
\begin{equation}\label{eq:coverbound}
 |V_L|\leq k\nu,\qquad |V_R|\leq d\nu,\qquad
 \nu=2(\max\{d,k\}+1)^g+1.
\end{equation}
\end{lemma}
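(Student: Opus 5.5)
The plan is an Erdős--Sachs style greedy construction on fibers of a fixed size, followed by passing to a connected component. Put $\Delta=\max\{d,k\}$ and fix the fiber size $\phi=\nu=2(\Delta+1)^g+1$. Take $V_L=[k]\times[\phi]$ and $V_R=[d]\times[\phi]$, where $(i,x)$ has left type $i$ and $(a,y)$ has right type $a$. A complementary colored cover on these vertex sets is exactly a family of $dk$ perfect matchings: for each pair $(a,i)$, one matching between the left fiber of $i$ and the right fiber of $a$. Every vertex then has degree $d$ or $k$, hence at most $\Delta$. So a ball of radius $r<g$ around any vertex has at most $1+\Delta+\dots+\Delta^{r}\le(\Delta+1)^g=:N$ vertices. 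This estimate holds at every intermediate stage, since partial matchings only lower degrees.

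I will add edges one at a time, keeping two invariants: each color class is a partial matching between the right fibers, and the current graph has girth at least $g$. Bipartiteness is automatic, so only even $g$ matter, and for small $g$ the girth condition is vacuous. Suppose color $(a,i)$ still has an unmatched left vertex $u$. Then it also has an unmatched right vertex $w$, because both fibers have size $\phi$. If $\dist(u,w)\ge g-1$, with $\infty$ allowed, I add $uw$ directly; any new cycle through $uw$ has length at least $\dist(u,w)+1\ge g$.

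Otherwise I perform a swap. I pick an existing edge $u'w'$ of color $(a,i)$ such that $u'$ and $w'$ are both at distance at least $g$ from both $u$ and $w$. I delete $u'w'$ and insert $uw'$ and $u'w$. This keeps every color class a matching and increases the number of matched pairs of color $(a,i)$ by one, so the process terminates with all $dk$ matchings perfect.

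The step I expect to be the main obstacle is the girth check after the swap. A cycle using exactly one new edge, say $uw'$, has length at least $1+\dist_{G'}(u,w')$. Here $G'$ is the graph with $u'w'$ removed, and removing an edge only increases distances, so this is at least $g+1$. A cycle using both new edges splits into two paths, one joining $\{u,w'\}$ to $\{u',w\}$ and one joining the remaining pair. Each such path joins a vertex in $\{u,w\}$ to a vertex in $\{u',w'\}$, so it has length at least $g$, and the cycle is long enough. The point of requiring all four distances to be large is precisely to cover these cases.

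The same counting shows that a suitable edge exists. Color $(a,i)$ currently has at least $\phi-1-(\text{unmatched})$ matched edges. An edge is forbidden if $u'$ or $w'$ lies in the radius-$(g-1)$ ball of $u$ or of $w$, and each such ball has at most $N$ vertices. Since the matching pairs each vertex with at most one partner, at most $2N$ edges are forbidden. I need to verify carefully that whenever the direct insertion fails, the number of matched edges exceeds $2N$. When many pairs are unmatched, a free far vertex $w$ should exist instead, because at most $N$ of the free right vertices lie near $u$. Balancing these two cases is exactly where $\phi=2N+1$ is used. If the bookkeeping comes out slightly off, I would first try enlarging the exclusion radius or the fiber size, rather than changing the argument.

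Finally, the resulting cover has girth at least $g$ and satisfies $|V_L|=k\nu$ and $|V_R|=d\nu$. As noted after Definition~\ref{def:cover}, any nonempty connected component is again a complementary colored cover. Its girth is at least that of the whole graph, and its vertex counts only decrease. This yields \eqref{eq:coverbound}.
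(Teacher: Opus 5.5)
Your incremental Erd\H{o}s--Sachs route differs from the paper's and can be made to work. As written, however, it has a genuine gap at exactly the counting step you flagged, and the remedies you suggest are not available. You estimate every ball of radius below $g$ by $N=(\Delta+1)^g$. In the failing case this gives only $U\le N$ unmatched pairs of color $(a,i)$, hence at least $\phi-U\ge N+1$ matched edges. Up to $2N$ of them may be forbidden, and since $N+1\le 2N$, no admissible swap edge is guaranteed. Enlarging the fiber size would violate the bounds $|V_L|\le k\nu$ and $|V_R|\le d\nu$ that the lemma asserts, because the component you pass to may be the whole graph. Enlarging the exclusion radius only enlarges the forbidden set. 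What closes the count is the sharper estimate you wrote down and then discarded: a ball of radius $r$ has at most $\sum_{j\le r}\Delta^j\le(\Delta+1)^r$ vertices. So every ball you use, of radius at most $g-1$, has at most $(\Delta+1)^{g-1}\le N/2$ vertices. Then $U\le N/2$, at most $N$ matched edges are forbidden, and $\phi-U\ge 3N/2+1>N$. Hence a swap edge exists with $\phi=2N+1$ exactly.

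The girth check for a cycle through both new edges $uw'$ and $u'w$ also contains a false step. Removing those two edges leaves two paths, each joining an endpoint of $uw'$ to an endpoint of $u'w$. One possible pairing is $u$ with $w$ and $w'$ with $u'$; neither is a cross pair. The $u$--$w$ path can be short, since $\dist(u,w)<g-1$ is precisely why you are swapping. That case is handled by the previous girth instead. The $w'$--$u'$ path avoids the deleted edge $u'w'$, so adding that edge back closes a cycle of the previous graph. Thus the path has at least $g-1$ edges, and the new cycle has at least $g+2$. With both repairs your construction proves the lemma, and it has the merit of being an explicit algorithm.

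The paper avoids both issues by a different setup. It starts from complete permutations $\sigma_{i,a}$ and takes an extremal configuration: largest girth truncated at $g+1$, then fewest shortest cycles. It then switches an edge $uv$ of a shortest cycle with an edge $xy$ of the same matching lying outside the radius-$g$ balls around $u$ and $v$. Both removed edges are old, so each within-pair path closes an old cycle. With no unmatched vertices to subtract, the count is simply $2(\mu+1)^g<\nu$.
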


\begin{lemma}[Integer completion-time input]\label{lem:input}
For every integer $h\geq0$ and rational $u\geq5$, there is a finite connected loopless undirected multigraph $H$ with $k$ source--sink pairs whose terminals are pairwise distinct, and a nonempty set $F$ of $f$ edges, such that
\begin{equation}\label{eq:input}
 k\geq uf,\qquad
 \dist_{H\setminus F}(s_i,t_i)\geq5^{2^h}.
\end{equation}
The graph $H$ has an exact integer scalar code of duration $3^{2^h}$ under a full-duplex budget of one symbol per direction per round, and of duration $T=2\cdot3^{2^h}$ under a shared unit budget.
\end{lemma}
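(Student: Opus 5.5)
We prove Lemma~\ref{lem:input} by induction on $h$, following the HWZ recursion~\cite{HWZ2020} but with covers from Lemma~\ref{lem:cover} and integer coefficients throughout. The invariant at level $h$ is a multigraph $H_h$ with $k_h$ pairs, a cut set $F_h$ of size $f_h$, three quantities, and a full-duplex integer code of duration $\tau_h=3^{2^h}$. The three quantities are the ratio $k_h/f_h$, the residual distance $\delta_h=\min_i\dist_{H_h\setminus F_h}(s_i,t_i)\ge 5^{2^h}$, and the duration $\tau_h$. The shared-budget claim follows at the end by splitting each full-duplex round into two half-rounds, one per direction, which doubles the duration to $T=2\cdot3^{2^h}$ and keeps each per-round use legal.

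\textbf{Base case.} The base gadget should have enough slack in $k/f$ to absorb all later losses. I would take a butterfly-like gadget with $k_0$ pairs and one cut edge $e$. Each $s_i$ is joined to $e$'s tail and to every $t_j$ with $j\neq i$, and each $t_i$ is joined to $e$'s head. All sources send $\sum_j X_j$ over $e$ while side edges carry individual messages. Sink $t_i$ then subtracts the $X_j$ it hears directly, giving an integer code of duration $3$ (send, relay, decode). After deleting $e$ and subdividing the side edges, the residual distance is at least $5$. Letting $k_0$ be large makes $k_0/f_0\ge u$ with room to spare.

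\textbf{Induction step.} Given $H_h$, I build $H_{h+1}$ by substituting copies. Take an outer copy of $H_h$ and, for every directed use of an outer edge, attach an inner copy of $H_h$ whose session $i$ replaces that edge. The attachment is dictated by a connected complementary colored cover from Lemma~\ref{lem:cover}, whose colors $(a,i)$ assign inner sessions to outer edge directions. The new cut set consists of the inner copies' cut edges, so $f_{h+1}\approx(\#\text{copies})\,f_h$, while the demands scale by a matching factor. Relative to the number of edges, the ratio then degrades only by a bounded factor. I would choose $u$-dependent base parameters to absorb this, or add fresh outer demands at each level.

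The durations square: one outer round costs one inner execution, so $\tau_{h+1}\le\tau_h^2$. This is fine, since $3^{2^{h+1}}=(3^{2^h})^2$, and the code stays integer linear because inner codes compose by substitution. For distances, any $s$--$t$ walk avoiding the new $F$ must traverse outer steps, each through an inner copy avoiding its cut, so the walk costs at least $\delta_h$ per step. The girth of the cover, taken at least roughly $\delta_h$, prevents short cycles of copies from shortcutting. Together these give $\delta_{h+1}\ge\delta_h^2\ge 5^{2^{h+1}}$.

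\textbf{Main obstacle.} The hard part is the distance bound in the induction step. I must show that a residual walk cannot cheat by moving between inner copies through shared terminals in a way that corresponds to a closed walk in the cover shorter than its girth. I would first project the walk to the cover and to the outer graph. Then I would argue that each projected step either is an outer noncut step costing $\delta_h$ inner length, or closes a cycle, which forces length at least the girth. Keeping $k\ge uf$ under squaring is the secondary bookkeeping issue, handled by the slack in the base case.
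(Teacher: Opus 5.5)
\textbf{Gap: the ratio $k\ge uf$ is not preserved by your recursion.} Your overall plan matches the paper: HWZ-style substitution along a colored cover, squaring of durations and residual distances, and splitting each full-duplex round into two directional half-rounds. But the step you defer as bookkeeping fails, and it is where the paper's key idea lies.

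In a self-product of $H_h$ with itself, the cover has $d=2(m-f)$ right types (one per direction of a noncut outer edge) and $k$ left types. With fiber size $\phi$ there are therefore $k\phi$ outer copies but $d\phi$ inner copies, so the demands do not scale by a matching factor. With the outer cut edges kept in the cut set,
$k_+/f_+=(k/f)\cdot k/(k+2(m-f))$.

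For your base gadget (three-edge side paths, as residual distance five requires) we have $k=q$, $f=1$, $m=3q^2-q+1$. After one step, $k/f=q/(6q-1)<1$ for every $q$.
\begin{itemize}
\item The loss factor $6q-1$ grows with the base parameters, so no base slack absorbs it.
\item Omitting the outer cut edges from the count still gives $q/(6q-2)<1$.
\item Adding fresh outer demands would need new codes and distance bounds for them.
\end{itemize}
The paper instead proves each level for \emph{every} rational target $u\ge5$, and multiplies two \emph{different} level-$h$ inputs. The outer input is built for target $3u$; the inner input is then built for target $\max\{5,m_1/f_1\}$, chosen after seeing the outer. Writing $R=k/f$ and $U=m/f$, this gives $R_+=R_1/(1+2(U_1-1)/R_2)\geq 3u/3=u$.

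\textbf{Errors in the distance step.} There are three.
\begin{itemize}
\item \emph{Outer cut edges.} Your new cut set contains only the inner copies' cut edges. A residual walk can then cross an outer cut edge: at cost $1$ if you keep it as an edge, or at cost $\delta_h$ through an inner copy if you replace it too. Its outer projection is then an own-pair walk in $H_h$ rather than in $H_h\setminus F_h$. Already for $h=0$, the route $s_i,v,w,t_i$ gives residual length at most $15<25$. The paper keeps each outer cut edge as a private relay path of $3^{2^h}$ edges and puts its first edge into the new cut set. Consequently $f_+$ counts $f_1$ cut edges per outer copy plus $f_2$ per inner copy.
\item \emph{Cover girth.} Girth of roughly $\delta_h$ is too small. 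Each physical edge accounts for at most two connector crossings, so a cyclic incidence trace of length $g$ yields only $g/2$ physical edges. The bound $\delta_{h+1}\geq\delta_h^2$ therefore needs girth at least $2\delta_h^2$.
\item \emph{Tree-supported case.} An inner excursion may detour through a child outer copy, whose entry and exit vertices are the two ends of an outer edge, not a demanded pair. So ``$\delta_h$ per outer step'' needs an induction on nesting depth, not a direct appeal to the inner residual distance.
\end{itemize}
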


The construction follows HWZ~\cite{HWZ2020}. Its basic coding operation is already valid over $\Z$. For $q=\lceil u\rceil$, place $q$ sources around a vertex $v$, $q$ sinks around a vertex $w$, and add the edge $vw$. For each $i\ne j$, add a private three-edge path from $s_i$ to $t_j$. In three rounds, the middle edge carries $\sum_i X_i$ and each side path carries its source symbol. Sink $t_i$ recovers
\begin{equation}\label{eq:basecode}
 X_i=\sum_jX_j-\sum_{j\ne i}X_j.
\end{equation}
After deleting $vw$, the distance from $s_i$ to $t_i$ is five. One side path cannot directly connect its own pair, so at least two additional star edges are needed; three side paths already cost nine.

The recursive product replaces the directed uses of nondeleted outer edges by sessions in inner copies, arranged according to a colored high-girth incidence graph. An outer round is simulated by one complete inner execution. The code times and deleted distances multiply, while the target ratio $k/f$ is maintained by choosing enough inner sessions. Identity~\eqref{eq:basecode} and all subsequent substitution identities hold for correlated payloads as well as independent inputs. Appendix~\ref{app:input} gives the complete construction, count formulas, and recursion; Appendix~\ref{app:excursions} proves the distance multiplication. Lemma~\ref{lem:cover} supplies the required finite incidence graph at every stage.

For the seed we take $h=3$ and $u=5^8$. Then
\begin{equation}\label{eq:seednumbers}
 T=13122,\qquad b=390625,\qquad k\geq390625f,
\end{equation}
and consequently
\begin{equation}\label{eq:margins}
 24T=314928<b,\qquad 24Tf<k.
\end{equation}
Only these two strict margins are used below.

\subsection{A reversible integer circuit}\label{sec:compiler}
For every input, allocate a proxy register at its source. For every transmission event, allocate a fresh sender register and a fresh receiver register at the event's two endpoints. Let $Z$ be this workspace register set. Before an event's transmission, add its local integer expression into its sender register using gates
\[
 E_{x\to y}(a):\quad y\leftarrow y+a x,\qquad a\in\Z.
\]
The registers $x,y$ of a gate are distinct; $x$ is the control and $y$ the target. Such a gate is invertible: its inverse has coefficient $-a$. Local gates use registers hosted at the same auxiliary vertex. A cross gate adds a sender register to its receiver register along the event's physical edge.

Events in one auxiliary round have private register pairs. Their cross gates therefore form a phase of disjoint gates, using each physical edge at most once. Execute all local preparations separately before that phase. The complete word defines an invertible integer map $U$ on the workspace.

Let $J$ inject the auxiliary source vector into its proxy registers and put zero in all other workspace coordinates. We check the simulation on $J\xi$ by induction on auxiliary rounds. The source proxies never change. Before an event in round $t$, its private sender and receiver registers have not yet been targets of any gate, so both still contain zero. Its local preparation uses only source proxies and receiver registers for strictly earlier events at that sender. By induction those registers contain precisely the payloads available in the auxiliary code. The preparation therefore puts the intended outgoing payload into the fresh sender register, and the cross gate copies it into the fresh receiver register. Disjointness permits every event of round $t$ to be handled in the same cross phase.

After all rounds, each auxiliary sink has its source proxies and received payloads in registers at that host. Its local decoder gives the corresponding row of an integer matrix $L$. Applying this decoder to the simulated execution recovers $\xi_i$ for every source vector $\xi$, proving the integer identity
\begin{equation}\label{eq:luj}
 LUJ=\Id_k.
\end{equation}
This identity is established on the zero-workspace input $J\xi$. The next construction removes any requirement to initialize the workspace of the final network.

Add logical banks $S=(S_i)$ at the sources and $Y=(Y_i)$ at the sinks. Write the full state as $(z,s,y)$ and define local maps
\[
 V(z,s,y)=(z+Js,s,y),\quad
 \Delta(z,s,y)=(z,s,y+Lz).
\]
The conjugate $\Theta=UVU^{-1}$ sends $(z,s,y)$ to $(z+UJs,s,y)$. Chronologically execute $\Theta$, then $\Delta$, then $\Theta^{-1}$, then $\Delta^{-1}$. The full calculation is
\begin{align*}
 (z,s,y)&\mapsto(z+UJs,s,y)\\
 &\mapsto(z+UJs,s,y+Lz+LUJs)\\
 &\mapsto(z,s,y+Lz+LUJs)\\
 &\mapsto(z,s,y+s).
\end{align*}
The final equality uses~\eqref{eq:luj}. Every initial workspace coordinate is restored, regardless of its value. This compute, copy, and uncompute pattern is the classical discipline of reversible computation~\cite{Bennett1973}.

Call this forward shear $P$. If its chronological gate sequence is $E_1,\ldots,E_\omega$, its matrix is $E_\omega\cdots E_1$. The chronological sequence $E_1^{-\mathsf T},\ldots,E_\omega^{-\mathsf T}$ therefore realizes $P^{-\mathsf T}$. Each new elementary gate exchanges control and target and negates its integer coefficient. This preserves locality and the underlying undirected cross edge. Thus $P^{-\mathsf T}$ is physically implementable and acts as $(z,s,y)\mapsto(z,s-y,y)$. The chronological sequence $P,P^{-\mathsf T},P$ gives
\begin{equation}\label{eq:signedswap}
 (z,s,y)\mapsto(z,s,y+s)
 \mapsto(z,-y,y+s)
 \mapsto(z,-y,s).
\end{equation}
Negating the first logical output recovers the exchanged input. This output operation is local. The signs in~\eqref{eq:signedswap} are needed over general groups.

The forward commutator shear executes four workspace words: two occurrences of $U$ and two of $U^{-1}$. Each word has $T$ cross phases, with at most one cross gate per physical edge in a phase. Inversion reverses the phase order and gate order; inverse transposition exchanges controls and targets and changes coefficients. Both operations preserve the disjoint register pairs within each phase and the one-use-per-physical-edge restriction. Thus the three shears have at most $12T$ cross phases and at most $12Tf$ cross gates on the $f$ cut edges.

We obtain a circuit $C$ on $N$ registers with $M$ elementary gates, a permutation $\pi$ of all registers, and output signs $\epsilon_j\in\{1,-1\}$ such that output $\pi(j)$, multiplied by $\epsilon_j$, equals input $j$. Workspace registers are fixed by the permutation. Writing $L_C$ for the number of cross phases and $g_F$ for the number of cross gates on $F$, we have
\begin{equation}\label{eq:phasebudget}
 L_C\leq12T,\qquad g_F\leq12Tf.
\end{equation}
The phases remain contiguous and register-disjoint. Local gates between them may share registers.

\subsection{The wire graph and its routing certificate}
Give each register its own input leaf and output leaf. Represent a gate by a vertex with two incoming and two outgoing ports, and join successive uses of a register by a wire; the wires of one register form its strand. Each wire is one physical undirected edge. An unused register has one direct input-to-output wire. At this stage the wire graph can have parallel edges and degree four. Its $N$ demands send input $j$ to output $\pi(j)$, including the workspace demands. Auxiliary host labels are annotations used to define the gates; they do not identify vertices or add edges in the wire graph.

Number the gates $1,\ldots,M$ and assign time labels $0$ to the inputs, $q$ to gate $q$, and $M+1$ to the outputs. If a wire joins an earlier label $\tau_1$ to a later label $\tau_2$, let $\alpha_e=\tau_2-\tau_1$. The wire spans of each register telescope. Summing over every register gives
\begin{equation}\label{eq:balance}
 \sum_e\alpha_e=N(M+1).
\end{equation}
At a cross gate whose auxiliary edge lies in $F$, mark the control input and control output ports. Let $\chi_e\in\{0,1,2\}$ count the marked endpoints of wire $e$. Each such gate contributes two marked incidences, so
\[
 \sum_e \chi_e=2g_F.
\]
A wire marked at both ends contributes two to this sum. Set
\begin{equation}\label{eq:ell}
 \ell_e=\alpha_e+2\chi_e,\qquad
 Q_0=\sum_e\ell_e=N(M+1)+4g_F.
\end{equation}

For any input-to-output walk, the signed temporal changes telescope to $M+1$. If $B_\tau$ is the total temporal span of its backward traversals, with multiplicity, its temporal length is
\begin{equation}\label{eq:backward}
 M+1+2B_\tau.
\end{equation}
Every walk therefore costs at least $M+1$, and one with a backward wire costs at least $M+3$.

Consider a wholly forward path for one of the $k$ designated demands, from logical source register $S_i$ to logical sink register $Y_i$. At a local gate the auxiliary host does not change. At a cross gate from control host $u$ to target host $v$, project its passage as follows:
\begin{equation}\label{eq:projection}
\begin{array}{c|c}
\text{input strand, output strand}&\text{auxiliary walk}\\\hline
\text{control, control}&u,v,u\\
\text{control, target}&u,v\\
\text{target, control}&v,u\\
\text{target, target}&v
\end{array}
\end{equation}
The out-and-back passage in the first row makes each traversal of an edge in $F$ correspond to one marked control port. Wire pieces join the projected passages at their common hosts.

A forward path meets at most one gate in each cross phase. Indeed, consider two consecutive visited gates in one phase. Every gate between them in the gate order lies in the same contiguous phase, so none is local, and the path leaves the first gate along a strand whose next use is the second gate. That register then occurs in both cross gates, contrary to phase disjointness. The auxiliary projection therefore has at most $2L_C\leq24T<b$ hops. It joins $s_i$ to $t_i$, so it must use $F$. The original path consequently pays at least one marked-port penalty of two.

We have proved lower bounds $h_j=M+1$ for every register demand, with $h_j=M+3$ for the $k$ designated source-bank demands. Therefore
\begin{equation}\label{eq:strictmetric}
 \begin{split}
 H_0:=\sum_jh_j&=N(M+1)+2k\\
 &>N(M+1)+48Tf\geq Q_0.
 \end{split}
\end{equation}
The strict inequality is~\eqref{eq:margins}, and the last step is~\eqref{eq:phasebudget}. Equation~\eqref{eq:lengthbound} certifies a routing ceiling strictly below one. The baseline equality~\eqref{eq:balance} explains why the workspace demands matter: omitting them would remove demand distance while leaving their wire budget in place.

\subsection{A connected subcubic scalar instance}\label{sec:split}
For the final degree bound, replace gate $q$ by two vertices $c_q,d_q$. The control input and output attach to $c_q$, the target input and output attach to $d_q$, and a new internal edge joins $c_q$ to $d_q$ (Fig.~\ref{fig:split}). The control vertex forwards $x$ along its strand and sends $x$ across the internal edge. The target vertex receives $x$ and forwards $y+a_qx$ along its own strand. Both vertices have degree three.

\begin{figure}[t]
\centering
\begin{tikzpicture}[x=0.9cm,y=0.8cm,>=Stealth, every node/.style={font=\small}]
\node[circle,draw,inner sep=2pt] (c) at (0,1) {$c_q$};
\node[circle,draw,inner sep=2pt] (d) at (0,-1) {$d_q$};
\draw[->] (-1.6,1)--node[above]{$x$}(c);
\draw[->] (c)--node[above]{$x$}(1.6,1);
\draw[->] (-1.6,-1)--node[below]{$y$}(d);
\draw[->] (d)--node[below]{$y+a_qx$}(1.6,-1);
\draw[->] (c)--node[right]{$x$}(d);
\end{tikzpicture}
\caption{A split elementary gate. Arrows indicate the scalar code's fixed direction; all five physical edges are undirected with shared capacity one. The two gate vertices have degree three.}
\label{fig:split}
\end{figure}

Splitting yields a simple graph directly. Two strand wires with the same endpoints would have the same register and the same earlier occurrence, hence be the same wire. A strand wire cannot coincide with an internal edge, whose endpoints belong to one gate. Distinct internal edges belong to distinct gates. Inputs and outputs remain distinct leaves.

Give strand wires their old lengths $\ell_e$, and give internal edges length zero. Contracting the internal edges projects every split-graph walk to an old wire-graph walk with the same inherited length, after deleting stationary steps. Thus~\eqref{eq:strictmetric} remains valid, with the same total budget and demand lower bounds.

There is also an explicit scalar schedule using each split-graph edge once. For gate $q\in\{1,\ldots,M\}$, send its internal edge at time $2q-1$, and its outgoing strand wires at time $2q$. Wires leaving inputs are sent at time zero. Every incoming strand of gate $q$ arrives strictly before $2q-1$. Thus the control symbol is locally available for the internal send, and the target has both symbols before its outgoing send. The entire scalar code has duration $1+2M$. Applying the local signs at the output gives exact decoding for every register input.

The split graph has the right local degree and code properties. Its certificate still has zero-length internal edges, unequal demand lower bounds, and possibly several components. We remove these features in a way that turns the weighted certificate into an ordinary hop-distance bound.

\begin{lemma}[Positive connected certificate]\label{lem:connected}
There is a finite connected simple subcubic graph with distinct leaf terminals, a scalar integer code using each edge exactly once, positive integer edge lengths $\ell_e$, and nonnegative integer own-pair lower bounds $h_i$, such that
\begin{equation}\label{eq:weightedsurplus}
 \sum_e\ell_e<\sum_i h_i.
\end{equation}
\end{lemma}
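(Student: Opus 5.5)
The plan is to start from the split graph of Section~\ref{sec:split} and its inherited metric, and fix its two defects in turn: zero-length internal edges, and possible disconnectedness. The unequal demand lower bounds are allowed by the statement, since only nonnegative integer $h_i$ are required. All quantities there are integers: the strand lengths $\ell_e=\alpha_e+2\chi_e$ are positive integers, because $\alpha_e\geq1$ for every wire since time labels strictly increase along a strand. The bounds $h_j\in\{M+1,M+3\}$ are integers, and by~\eqref{eq:strictmetric} we have $H_0-Q_0\geq1$.

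\emph{Positive lengths.} Choose an integer $K>M$. Give every strand wire the length $K\ell_e$ and every internal edge the length $1$, and multiply every lower bound to $Kh_j$. Any own-pair walk in the split graph has length at least $K$ times its inherited length, because internal edges now contribute nonnegatively. By the contraction argument of Section~\ref{sec:split}, that inherited length is at least $h_j$, so the new bounds $Kh_j$ remain valid. The new total is $KQ_0+M<KQ_0+K\leq KH_0$. This gives~\eqref{eq:weightedsurplus} with all lengths positive integers. The graph, its subcubic simple structure, its distinct leaf terminals and the one-use schedule of duration $1+2M$ are untouched.

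\emph{Connectivity.} Every register demand has both terminals in one component, joined by its own strand (routed through the gates), so demands partition among components exactly as edges do. Both sides of~\eqref{eq:weightedsurplus} are sums over components. Since the total strict inequality holds, some component $G_0$ satisfies $\sum_{e\in E(G_0)}\ell_e<\sum_{i\in G_0}h_i$. In particular $G_0$ carries at least one demand, because the left side is nonnegative. Restrict everything to $G_0$. It is connected, simple and subcubic, and its terminals are distinct leaves. Its own-pair walks are walks of the whole graph, so the $h_i$ stay valid.

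The step needing the most care is that the scalar code restricts. Each transmitted symbol in $G_0$ is a function of symbols received along edges of $G_0$ and of inputs at vertices of $G_0$. By induction on rounds, the restricted schedule is causal and uses each edge of $G_0$ exactly once. Every decoder in $G_0$ computes the same integer combination as before, so it still recovers its register input, with the output sign applied locally. Inputs from other components never enter, so exactness holds for all inputs of $G_0$.
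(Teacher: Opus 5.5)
Your rescaling is sound and is essentially the paper's. The paper puts $a\ell_e+1$ on every edge with $a=m_*+1$; your $K\ell_e$ on strands and $1$ on the $M$ internal edges works equally well, since $\alpha_e\geq1$.

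The gap is in the connectivity step. There you assert that every register demand has both terminals in one component because they are ``joined by its own strand.'' That holds only for the workspace registers, which $\pi$ fixes. The demands run from input $j$ to output $\pi(j)$, and by~\eqref{eq:signedswap} $\pi$ exchanges $S_i$ and $Y_i$. So the demand of $S_i$ ends at the output leaf of the $Y_i$ strand, and the demand of $Y_i$ ends at the output leaf of the $S_i$ strand. For these $2k$ demands no single strand joins source to sink, so your argument does not place them in one component. They include the $k$ designated demands, whose raised bound $M+3$ is the entire source of the surplus. Without this fact the demand sum does not split by components. The component you select could then retain a session whose other terminal lies elsewhere, which would invalidate both its own-pair bound and its decoder.

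The missing ingredient is the paper's exact-decoding argument. Fix all inputs except input $j$. By induction on rounds, every symbol in a component not containing the input leaf of $j$ is an integer row supported only on that component's inputs, so it is unaffected by varying input $j$. The decoder at output $\pi(j)$ must recover input $j$ exactly, so that output lies in the same component as input $j$. With this fact established, your component selection and code restriction go through as you wrote them.
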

\begin{proof}
Let $m_*$ be the number of edges of the split graph and $a=m_*+1$. Replace every old length by $a\ell_e+1$ and every demand bound by $ah_i$. All path bounds persist because the added unit lengths are nonnegative. Since the old surplus $H_0-Q_0$ is a positive integer, the new surplus is
\[
 a(H_0-Q_0)-m_*\geq1.
\]
Every new edge length is positive.

Exact decoding forces each demanded source and sink to belong to the same component. To see this, vary one source coordinate while fixing all others. By strict-causality induction, events in a component depend only on inputs whose sources lie there. A sink in another component could not recover the varied coordinate.

Partition both the edge budget and the demand sum by components. Their total difference is positive, so some component has a positive difference. It contains a demand, since its edge budget is nonnegative. Restrict the code to that component and retain exactly its sessions. Every retained event has its sender in the component; any local source used by it is there, and any predecessor arrives along an edge whose other endpoint is also there. Thus its local formula uses only retained sources and retained events. Specializing outside source coordinates to zero in the original row identities proves the retained decoder identities. Locality, strict causality, exactness, and one use per retained edge all persist. Simplicity, degree bounds, and terminal leaves are inherited.
\end{proof}

\begin{lemma}[Hop normalization]\label{lem:normalize}
The graph in Lemma~\ref{lem:connected} can be transformed into a scalar instance with $\rho<1$.
\end{lemma}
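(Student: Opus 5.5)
Starting from the graph of Lemma~\ref{lem:connected}, I will turn the weighted certificate into a hop certificate by subdivision, and make the demand bounds uniform by padding the terminals.

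\textbf{Step 1: unit lengths.} Subdivide every edge $e$ into a path of $\ell_e$ edges; this is possible because each $\ell_e$ is a positive integer. The result stays simple, connected and subcubic, and new vertices have degree two. Terminals stay distinct leaves. The scalar code extends by relaying each symbol along its path one edge per round in the edge's reference direction. Each new edge is then used exactly once. Every walk in the subdivided graph projects to a walk in the old graph, so own-pair hop distances are at least $h_i$. The edge count is now $m'=\sum_e\ell_e<\sum_i h_i$.

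\textbf{Step 2: equalize the demand bounds.} Definition~\ref{def:scalar} needs a single $b$ with $\rho=m/(kb)<1$. Let $b=\max_i h_i$, and for each session $i$ attach to its source leaf a pendant path of $b-h_i$ new edges, moving the source to the new end. The new source is still a leaf, and the old leaf becomes a degree-two vertex, so the graph stays subcubic. Own-pair walks now have at least $b$ edges, since any walk from the new source must traverse the pendant path and then a walk of length at least $h_i$.

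The code relays the input down the pendant path first, then runs the old code. All rounds are shifted by a common offset, so every edge is still used once. The new edge count is
\[
m=m'+\sum_i(b-h_i)<\sum_i h_i+\sum_i(b-h_i)=kb,
\]
so $\rho<1$. Sessions with $h_i=0$ cannot occur, because distinct source and sink force $h_i\ge1$ in hop terms after Step 1. In any case the padding handles $h_i$ as small as zero.

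\textbf{Step 3: remaining conditions.} An edge joining two terminals would make a source and a sink adjacent leaves, forming a component of one edge. That can only be an own-pair edge with $b\ge$ its length, and it is excluded if we first subdivide every edge once more. Doubling all $\ell_e$ and $h_i$ preserves the strict inequality, so I will apply Step 1 with lengths $2\ell_e$, which guarantees every edge has a nonterminal endpoint.

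The main point needing care is the code bookkeeping: checking that relaying along subdivision paths and pendant paths keeps strict causality and the exact one-use property with integer coefficients. I will handle this by replacing each old round with a block of rounds long enough to accommodate the longest subdivided path.
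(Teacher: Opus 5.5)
Your proposal is correct and follows essentially the same route as the paper: pad at each source leaf so that every own-pair bound reaches a common value, subdivide every edge into $2\ell_e$ unit edges (the doubling secures the nonterminal-endpoint condition), and relay the one-use code in blocks of rounds. The differences are cosmetic. You subdivide first and then attach pendant paths, and you use $\max_i h_i$ rather than the paper's $h_*=\sum_i h_i$ as the common target; this works equally well and even gives a slightly smaller ceiling.
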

\begin{proof}
Let $k$ be the retained number of sessions and put $h_*=\sum_i h_i>0$. For each session $i$, increase the length of its source leaf edge by $h_*-h_i\geq0$. Every own-pair walk uses its source leaf edge, so its new weighted length is at least $h_*$. The total padded edge budget is
\begin{equation}\label{eq:padbudget}
 Q'=\sum_e\ell_e+\sum_i(h_*-h_i)<kh_*.
\end{equation}
Call the resulting positive integer lengths $\ell'_e$.

Replace edge $e$ by a private path of $2\ell'_e$ unit edges. The resulting graph is connected, simple, and subcubic, and all old terminals remain distinct leaves. Each new edge has a nonterminal endpoint, because every replacement path has at least two edges and hence an interior vertex. Every path between old terminals traverses an entire private replacement whenever it enters its interior; erasing excursions and collapsing these paths gives an old walk. Hence every own-pair distance is at least $b_0=2h_*$. Its edge count is $m_0=2Q'$, so
\begin{equation}\label{eq:seedceiling}
 \rho_0:=\frac{m_0}{kb_0}=\frac{Q'}{kh_*}<1.
\end{equation}

To relay the scalar code, let $s_*=1+\sum_e2\ell'_e$. Replace an old event at time $t$ by successive sends on its private path at times
\[
 s_*t,\ s_*t+1,\ \ldots,\ s_*t+2\ell'_e-1.
\]
An old predecessor at $t'<t$ reaches its endpoint by time $s_*(t'+1)-1<s_*t$. Thus every old local expression is still available when required. Each new edge is used once. The output identities and local signs are unchanged. With the resulting finite duration $T_0$, all fields of Definition~\ref{def:scalar} are satisfied.
\end{proof}

\begin{theorem}[Alphabet-independent seed]\label{thm:seed}
There exists a scalar instance $\mathcal S_0$ with $0<\rho_0<1$. Its graph, scalar schedule, and integer coefficients are selected without choosing a field or group.
\end{theorem}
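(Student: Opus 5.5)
The plan is to assemble the chain of constructions of this section, instantiated with the seed parameters $h=3$, $u=5^8$. Every step manipulates only integer data: graphs, schedules, and integer coefficient rows. As a result, no alphabet is ever chosen.

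\textbf{Step 1: the auxiliary input.} Invoke Lemma~\ref{lem:input} with $h=3$ and $u=5^8$. This gives the multigraph $H$, the cut set $F$ with $f\ge1$ edges, and an exact integer scalar code of duration $T=13122$ under the shared budget. Every own-pair walk in $H\setminus F$ has at least $b=390625$ edges, and $k\ge 390625f$. The margins~\eqref{eq:margins}, $24T<b$ and $24Tf<k$, are then pure arithmetic.

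\textbf{Step 2: the circuit and its certificate.} Compile the auxiliary code into the reversible integer circuit $C$ of Section~\ref{sec:compiler}. This yields the signed permutation~\eqref{eq:signedswap} and the budgets~\eqref{eq:phasebudget}. Next, build the wire graph with the metric $\ell_e=\alpha_e+2\chi_e$. Use the forward-path projection argument together with $2L_C\le 24T<b$ to force a marked-port penalty on every forward designated route; backward routes already pay $2B_\tau\ge2$. This gives the strict inequality~\eqref{eq:strictmetric}, $H_0>Q_0$. Splitting the gates as in Section~\ref{sec:split} preserves~\eqref{eq:strictmetric} and supplies a one-use integer schedule on a simple subcubic graph with leaf terminals. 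Lemma~\ref{lem:connected} then selects a connected component carrying positive integer lengths with $\sum_e\ell_e<\sum_ih_i$. Finally, Lemma~\ref{lem:normalize} converts this into a scalar instance $\mathcal S_0$ satisfying Definition~\ref{def:scalar} with $\rho_0<1$.

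\textbf{Step 3: the remaining claims.} Two things remain. The first is positivity, $\rho_0=m_0/(kb_0)>0$. This holds because the retained component contains at least one session, so $k\ge1$; every edge has positive padded length, so $m_0=2Q'>0$; and $b_0=2h_*$ is a positive integer. The second is alphabet independence, which I will argue by inspection. The graph depends only on $H$, $F$, the gate list of $C$, and the integer lengths. The schedule times come from gate indices and path lengths. The local formulas are integer combinations, namely gate coefficients $a\in\Z$ and their negations, the decoder $L$, and the sign corrections $\epsilon_j$. Correctness is the integer identity~\eqref{eq:luj} propagated through~\eqref{eq:signedswap}, the component restriction (where outside sources are specialized to zero), and relaying. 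Each of these is an identity of integer rows in the sense of Definition~\ref{def:integer}.

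\textbf{Main obstacle.} The only genuinely delicate point is already settled in the preceding text: the strict inequality $H_0>Q_0$, via the projection showing that forward designated routes must traverse $F$. For the theorem itself, the step I would check most carefully is that Lemma~\ref{lem:connected} keeps an integer code with one use per edge and leaves $k\ge1$. This is guaranteed because the chosen component has a positive surplus and nonnegative edge budget, so its demand sum is positive and it contains a session.
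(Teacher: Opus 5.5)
Your proposal is correct and matches the paper's proof. Like the paper, you invoke Lemma~\ref{lem:input} with $h=3$, $u=5^8$, feed the reversible construction, the split schedule, and the strict surplus~\eqref{eq:strictmetric} into Lemma~\ref{lem:connected}, and apply Lemma~\ref{lem:normalize}; you then note that each step is a finite graph construction or an integer row identity, and that positivity of $m_0,k,b_0$ gives $\rho_0>0$. Your added check, that the chosen component's positive surplus against a nonnegative edge budget forces $k\geq1$ and $h_*>0$, is precisely what the paper's one-line positivity remark relies on.
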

\begin{proof}
Use Lemma~\ref{lem:input} with the parameters in~\eqref{eq:seednumbers}. The reversible construction, the split schedule, and~\eqref{eq:strictmetric} are the inputs to the proof of Lemma~\ref{lem:connected}. Apply Lemma~\ref{lem:normalize}. Every step is a finite graph construction or an integer row identity. Positivity of $m_0,k,b_0$ gives $\rho_0>0$.
\end{proof}

\subsection{A finite block above the routing ceiling}\label{sec:finite-pipeline}
The scalar instance has a strict routing ceiling, but its scalar execution may have a long duration. To prove Theorem~\ref{thm:finite}, we overlap independent executions. The one-use edge property makes a one-round spacing possible.

\begin{lemma}[Unit-start pipeline]\label{lem:pipeline}
For every scalar instance and every $B>0$, an exact integer block protocol delivers $B$ symbols per session in $B+T$ rounds.
\end{lemma}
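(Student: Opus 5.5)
We run $B$ independent copies of the scalar code of Definition~\ref{def:scalar}, copy $c\in\{0,\ldots,B-1\}$ shifted to start at round $c$. Copy $c$ carries the $c$-th coordinate of every session's message. Concretely, if the scalar code has an event on edge $e$ at round $t_e\in\{0,\ldots,T-1\}$, copy $c$ uses $e$ in the same reference direction at round $t_e+c$. It transmits the same integer combination, but of the $c$-th input coordinates and of the symbols received in copy $c$'s earlier events. The total duration is then $(B-1)+T\le B+T$, and we pad with idle rounds to exactly $B+T$ if needed.

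\textbf{Step 1 (no collisions).} The main point, though it is easy, is that no physical edge is used twice in one round. Each edge $e$ is used exactly once in the scalar code, at round $t_e$, by condition~3 of Definition~\ref{def:scalar}. Two uses of $e$ in the block protocol therefore come from distinct copies $c\ne c'$, at rounds $t_e+c\ne t_e+c'$. Hence every round uses each edge at most once, which is the per-round constraint of the model. This is precisely where the one-use condition is needed: with two uses of an edge at different times, shifted copies could collide.

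\textbf{Step 2 (causality and locality).} Fix copy $c$ and an event in it at round $t+c$. Its scalar formula uses only the sender's own input coordinate (now coordinate $c$, which is locally held) and symbols received at rounds $t'<t$ in the scalar code. In copy $c$ those symbols arrive at rounds $t'+c<t+c$. So every transmitted symbol is a local integer combination of the sender's message coordinates and strictly earlier received symbols, as Definition~\ref{def:integer} requires. The schedule is fixed and independent of the messages.

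\textbf{Step 3 (exactness).} After round $T-1+c$, sink $t_i$ holds exactly the local information of the scalar execution on inputs $(X_j^{(c)})_j$. Applying the scalar decoder row recovers $X_i^{(c)}$. The scalar correctness is an identity of integer coefficient rows, and copies do not interact since each copy's formulas reference only its own symbols. Hence the identity holds coordinatewise for every tuple of messages, and all $B$ coordinates are recovered. The result is an exact integer block protocol with $B$ symbols per session in $B+T$ rounds.
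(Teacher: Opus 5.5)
Your proposal is correct and matches the paper's proof: you shift block $a$ to start at round $a$, use the one-use property of Definition~\ref{def:scalar} to rule out edge collisions, and check that strict causality and the integer decoding identities hold within each block. As in the paper, every event finishes before round $B+T$, so padding the unused final slots gives the stated duration.
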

\begin{proof}
Index the independent scalar blocks by $a=0,\ldots,B-1$. If an event has scalar time $\tau_e$, send its symbol for block $a$ at time $a+\tau_e$. Each physical edge has one scalar event, so distinct blocks use that edge at distinct times. Within one block, every predecessor remains strictly earlier. Induction on scalar time proves each payload and decoder identity. The final time is less than $B+T$; unused final slots cause no problem. The integer local expressions use only the coordinates of their own block.
\end{proof}

\begin{proof}[Proof of Theorem~\ref{thm:finite}]
Take the scalar instance $\mathcal S_0$ of Theorem~\ref{thm:seed}, with parameters $k_0,m_0,b_0,T_0$. Its strict certificate says $m_0<k_0b_0$. Choose
\begin{equation}\label{eq:finiteblock}
 B_0=m_0T_0+1,\qquad D_0=B_0+T_0.
\end{equation}
The positive integer $\delta=k_0b_0-m_0$ satisfies
\[
 B_0\delta\geq B_0>m_0T_0.
\]
Rearranging this strict inequality gives
\begin{equation}\label{eq:finitestrict}
 \frac{B_0}{D_0}>\frac{m_0}{k_0b_0}
 =\rho_0\geq R_\rt(\mathcal S_0).
\end{equation}
The last inequality is~\eqref{eq:hopceiling}. Lemma~\ref{lem:pipeline} supplies an exact integer block protocol of duration $D_0$ for these $B_0$ symbols. Interpret every coefficient in $\mathbb F_2$. All local maps and decoders are binary linear, and every shared-capacity and strict-causality condition is unchanged. The scalar-instance definition supplies the graph and terminal properties. Thus this finite binary protocol proves~\eqref{eq:finitecounterexample}.
\end{proof}

\section{Alphabet-Uniform Coding and Capacity Regions}\label{sec:extensions}
Theorem~\ref{thm:finite} used the binary interpretation of a code defined over the integers. No part of the graph, schedule, or local row identities depended on this choice. Interpreting the same identities in any nontrivial finite abelian group gives an exact code on the same graph. In a finite field, integer coefficients become field elements and each local function is linear over that field. In particular, the finite block length~\eqref{eq:finiteblock} separates coding from routing over all these alphabets simultaneously.

The integer construction gives more than one separated rate. Lemma~\ref{lem:pipeline} supplies common rates $B/(B+T)$ approaching one on every scalar instance. We determine the full closure of its vector-rate region by combining this pipeline with the capacity bound at a source leaf. These conclusions will also apply to every amplified and subdivided scalar instance.

\begin{lemma}[Leaf bound]\label{lem:leaf}
If $s_i\neq t_i$ and the source $s_i$ has one incident edge, every exact deterministic fixed-schedule protocol over a finite alphabet $A$, $|A|>1$, satisfies $B_i\leq D$.
\end{lemma}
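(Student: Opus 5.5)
Everything the sink $t_i$ learns about $X_i$ has to cross the single edge $e$ at $s_i$, and that edge has only $D$ uses in total across both directions. So the plan is a counting, or pigeonhole, argument rather than an entropy calculation.

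Fix all other messages $X_j$, $j\ne i$, at arbitrary values. Let $e$ be the unique edge at $s_i$, and let $d_+$ and $d_-$ be its scheduled uses out of and into $s_i$. The shared-capacity rule gives $d_++d_-\le D$, so $d_+\le D$.

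The key claim is that the whole local view of every vertex other than $s_i$ is a function of the sequence of symbols that $s_i$ sends on $e$. The local view means its own message coordinates together with everything it receives.

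We prove the claim by induction on rounds. Consider the graph with $s_i$ removed. The only edge joining $s_i$ to the rest is $e$. Every transmission between two vertices other than $s_i$ is a deterministic function of that sender's own (fixed) messages and symbols it received in strictly earlier rounds. By induction, those earlier symbols are functions of the outgoing $e$-symbols sent so far. The schedule is fixed and independent of the messages, so the positions of these symbols are fixed as well.

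Therefore the information available at $t_i$ is a function of the word $w\in A^{d_+}$ carried from $s_i$ on $e$. Note that $t_i\ne s_i$, so $t_i$ is indeed outside $s_i$.

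Exact decoding requires $X_i$ to be a function of what $t_i$ sees. Hence $X_i\mapsto w(X_i)$ must be injective on $A^{B_i}$. It follows that $|A|^{B_i}\le|A|^{d_+}$. Since $|A|>1$, this gives $B_i\le d_+\le D$.

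The one delicate step is the induction, because of feedback. Symbols sent from $s_i$ depend on what it receives on $e$, and those received symbols in turn depend on earlier outgoing symbols. This does not break the argument. We only need the direction from $w$ to everything outside $s_i$, and that direction is established by strict causality: each event uses only earlier rounds, and the schedule does not depend on the messages. Correctness is required for every message tuple, so fixing the other messages is a legitimate specialization.
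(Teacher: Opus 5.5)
Your proof is correct and follows the paper's argument. You fix the other messages, show by induction on rounds that everything outside $s_i$ is determined by the transcript of the leaf edge, and conclude by injectivity that $|A|^{B_i}$ is at most the number of possible transcripts. The only difference is that you record just the $d_+$ outgoing symbols instead of the paper's padded length-$D$ record of both directions; this is harmless, since the incoming symbols are themselves determined outside $s_i$, and it gives the slightly sharper bound $B_i\le d_+$.
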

\begin{proof}
Fix all messages except $X_i$. Record the complete sequence of symbols sent on the source leaf edge, in both directions, padding its unused rounds with a fixed symbol. There are at most $|A|^D$ records. Equal records yield identical information outside the source leaf: induct on rounds, using the identical incoming edge transcript and the fixed other messages. The sink lies outside the leaf and decodes $X_i$ exactly, so distinct values of $X_i$ must have distinct records. Hence $|A|^{B_i}\leq|A|^D$, which implies $B_i\leq D$. This proof allows arbitrary deterministic local functions.
\end{proof}

\begin{proposition}\label{prop:capacity}
Every scalar instance has common-rate supremum one over every nontrivial finite abelian group, and field-linear common-rate supremum one over every finite field. Its symbol and field-linear vector regions have closure $[0,1]^k$.
\end{proposition}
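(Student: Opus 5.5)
The proof has two halves: an achievability half from the pipeline, and a converse half from the leaf bound. Achievability comes first. Fix a scalar instance with duration $T$ and $k$ sessions. Lemma~\ref{lem:pipeline} gives, for every $B>0$, an exact integer block protocol of duration $D=B+T$ delivering $B$ symbols per session. Interpreted in a nontrivial finite abelian group $A$, the integer row identities hold in $A$, so the protocol is exact over $A$. Interpreted in a finite field $\F$ through $\Z\to\F$, every local formula is $\F$-linear, so it lies in $\mathcal R_\lin(G;\F)$. Hence the common rates $B/(B+T)$ are achieved in both classes, and they tend to one as $B\to\infty$. The supremum is therefore at least one.

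The converse is Lemma~\ref{lem:leaf}. Every source in a scalar instance is a leaf distinct from its sink, so any exact deterministic fixed-schedule protocol has $B_i\le D$ for each $i$. Every achieved rate vector, and in particular every common rate, therefore lies in $[0,1]^k$. Linear codes are a special case of deterministic ones, so the same bound holds in the linear class. Both suprema equal one, and both regions are contained in $[0,1]^k$. Since the cube is closed, their closures are contained in it as well.

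For the reverse inclusion of closures, I will show that the achievable vectors are dense in $[0,1]^k$. Take a target $(r_1,\ldots,r_k)\in[0,1]^k$ and a block length $B$ with $D=B+T$. Zero-length messages are allowed, so I can set $B_i=\lfloor r_iB\rfloor\le B$. Each session then sends $B_i$ genuine symbols and pads the remaining $B-B_i$ block coordinates with a fixed zero symbol, known to all nodes. These padded coordinates are constants, so the integer (or linear) local formulas remain valid, and the sinks simply discard them. This gives exactness for every choice of the actual message coordinates. The resulting rate vector is $(\lfloor r_iB\rfloor/(B+T))_i$, which converges to $(r_i)$ as $B\to\infty$.

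The only delicate point is checking that padding preserves linearity and the definition of a protocol with message length $B_i$. Substituting zero for the unused coordinates in each linear formula yields a linear formula in the remaining inputs, so this is fine. The vector regions therefore have closure exactly $[0,1]^k$.
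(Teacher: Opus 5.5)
Your proposal is correct and follows the paper's proof essentially step for step. The achievability half uses the unit-start pipeline with rates $B/(B+T)\to1$ under every group and field interpretation. The converse is the source-leaf bound, which also covers linear codes. The density half takes $B_i=\lfloor r_iB\rfloor$ and zeroes the unused coordinates, which the paper calls a restriction of the integer formulas and quantifies by the error bound $(T+1)/(B+T)$ in~\eqref{eq:vectorerror}.
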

\begin{proof}
Lemma~\ref{lem:pipeline} gives rates $B/(B+T)\to1$, simultaneously under every group or field interpretation. Lemma~\ref{lem:leaf} bounds every coordinate by one, even for nonlinear symbol functions.

For a target vector $r\in[0,1]^k$, take $B_i=\lfloor r_iB\rfloor$ and use the $B$-symbol pipeline after setting the unused coordinates of session $i$ to zero. This is a restriction of the integer local formulas and their decoding identities. It is valid for unequal message lengths and gives
\begin{equation}\label{eq:vectorerror}
 0\leq r_i-\frac{\lfloor r_iB\rfloor}{B+T}
 <\frac{T+1}{B+T}.
\end{equation}
Thus every point of the cube is a limit of achievable vectors. The cube is closed, and the leaf bound contains both rate regions in it, proving equality of the closures.
\end{proof}

The finite-block regions themselves can be identified exactly. Rate one is never attained by a finite block, and every rational vector below one is attained. For an infinite field $\F$, the region $\mathcal R_\lin(G;\F)$ is defined by the same conditions as in Section~\ref{sec:model}, with rates counted in field symbols per round.

\begin{proposition}[Exact finite-block region]\label{prop:exactregion}
For every scalar instance, the set of rate vectors of exact deterministic fixed-schedule codes over a nontrivial finite abelian group, and the set of rate vectors of field-linear codes over any field, finite or infinite, both equal the set of rational vectors in $[0,1)^k$.
\end{proposition}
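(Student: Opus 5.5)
Two inclusions have to be shown: every exact code has a rational rate vector in $[0,1)^k$, and every rational vector in $[0,1)^k$ is the rate vector of an integer-coefficient code, which we then interpret in the given group or field.

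\emph{Achievability.} Take a rational vector $r\in[0,1)^k$ and write $r_i=p_i/q$ with a common denominator $q$, so $p_i<q$. Pick a multiplier $c$ and set $D=cq$ and $B_i=cp_i$. Let $B=\max_i B_i$; this is at most $c(q-1)$. Run the unit-start pipeline of Lemma~\ref{lem:pipeline} with $B$ blocks, setting the unused coordinates of session $i$ to zero as in the proof of Proposition~\ref{prop:capacity}. It finishes by round $B+T\le cq-c+T$, and this is at most $D$ once $c\ge T$. Idle padding rounds up to exactly $D$ are allowed. The rate vector is then exactly $(B_i/D)=r$. Because the code is integer-coefficient, it is valid over every nontrivial abelian group and every field, including infinite fields via $\Z\to\F$. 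The zero vector is covered by taking all $B_i=0$ and $D=1$.

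\emph{Rationality and the upper bound $\le 1$.} Rates $B_i/D$ are rational by definition. For finite alphabets, Lemma~\ref{lem:leaf} gives $B_i\le D$. For linear codes over an infinite field, a dimension count replaces cardinality. Fix all other messages at zero. The map $X_i\mapsto$ (transcript on the source leaf edge) is linear into $\F^{D}$. By the same round induction as in Lemma~\ref{lem:leaf}, the sink's data is a function of that transcript, so this map must be injective. Hence $B_i\le D$.

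\emph{Strictness, the main obstacle: excluding $B_i=D$.} Suppose $B_i=D$. Then the leaf transcript map is a bijection (linear, or an injection of finite sets of equal size). Consider round $0$. A symbol sent from the far side into $s_i$ at round $0$ depends only on other messages, which are fixed, so it carries no information about $X_i$. Hence every one of the $D$ rounds must be a use from $s_i$ outward carrying fresh information. In particular, the last round $D-1$ must be an outward send from $s_i$ whose symbol is needed for injectivity. That symbol, however, arrives at the neighbour only at the end of round $D-1$. It can influence nothing further, because any relay needs a strictly later round, and none exists. Since $t_i\ne$ the neighbour is not guaranteed, we use the fact that terminals are distinct leaves, so the neighbour of $s_i$ is not $t_i$. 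The decoder at $t_i$ therefore depends only on the first $D-1$ leaf symbols. Formally, the information reaching any vertex other than the leaf's neighbour by round $D$ is a function of rounds $0,\dots,D-2$ of the leaf transcript. This yields at most $|A|^{D-1}$ (or dimension $D-1$) distinguishable values, contradicting $B_i=D$. The hard part is making this causality argument uniform for arbitrary nonlinear group codes. I would do it by induction on rounds, showing that the state outside $\{s_i\}\cup N(s_i)$ at the end of round $t$ is a function of leaf symbols sent through round $t-1$. Combined with exactness, this yields $B_i\le D-1<D$ and completes the equality of the two sets.
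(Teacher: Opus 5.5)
Your proposal is correct and follows the paper's proof. Achievability comes from the unit-start pipeline of Lemma~\ref{lem:pipeline} with unused coordinates set to zero; your choice $D=cq$, $c\geq T$, with idle padding, works as well as the paper's exact choice $D=d(T+1)$, $B=D-T$, $B_i=a_i(T+1)$, which also avoids a separate zero-vector case. Strictness comes from discarding round $D-1$ of the source-leaf transcript, with a dimension count for infinite fields.

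The detour through ``$B_i=D$ forces every round to be outward'' is unnecessary. The shortened-transcript induction gives $B_i\leq D-1$ directly, provided you state it for the symbols sent by every vertex other than $s_i$, so that the neighbour's state is covered.

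One justification must be corrected. That the neighbour of $s_i$ is not $t_i$ does not follow from the terminals being distinct leaves. A single edge joining $s_1$ to $t_1$ is connected, has distinct leaf terminals, and attains rate one with $B_1=D$. The fact follows instead from the clause of Definition~\ref{def:scalar} that every edge has a nonterminal endpoint, which is what the paper invokes.
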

\begin{proof}
Fix a session $i$, its source leaf edge $e$, and a code of duration $D$ in which session $i$ has $B_i$ coordinates. The other endpoint of $e$ is a nonterminal by Definition~\ref{def:scalar}, so it is not $t_i$: a symbol sent on $e$ in round $D-1$ reaches neither $t_i$ nor any later event, and $t_i$ receives no symbol directly from $s_i$. Repeat the argument of Lemma~\ref{lem:leaf} with the transcript of $e$ restricted to rounds $0,\ldots,D-2$. Every symbol sent by a vertex other than $s_i$ depends only on fixed messages, on symbols from such vertices in earlier rounds, and on symbols of $e$ in rounds at most $D-2$. By induction on rounds, equal shortened transcripts therefore give the same symbols sent by every vertex other than $s_i$, hence the same decoded message. Thus
\[
 |A|^{B_i}\leq|A|^{D-1},\qquad\text{so}\qquad B_i\leq D-1\quad\text{and}\quad\frac{B_i}{D}<1.
\] For a field-linear code with the other messages set to zero, every transmitted symbol is, by the same induction, an $\F$-linear function of $X_i$, so the shortened transcript is a linear and injective map $\F^{B_i}\to\F^{D-1}$, and a comparison of dimensions gives $B_i\leq D-1$ without any finiteness of the field. All rates are quotients of integers.

Conversely, let $r\in[0,1)^k$ be rational, with common denominator $d$ and numerators $a_i<d$. Put
\[
 D=d(T+1),\qquad B=D-T,\qquad B_i=a_i(T+1).
\]
Since $a_i+1\leq d$, we have $B_i\leq B$. Give session $i$ the first $B_i$ coordinates of the $B$-symbol pipeline of Lemma~\ref{lem:pipeline} and set the remaining coordinates to zero. The duration is $D$, and the rate of session $i$ is
\[
 \frac{B_i}{D}=\frac{a_i(T+1)}{d(T+1)}=\frac{a_i}{d}.
\] The protocol is the integer pipeline, so it is exact over every group and linear over every field.
\end{proof}

\section{Polylogarithmic Gap Amplification}\label{sec:amplification}
We amplify the scalar seed by composing copies of a network with one another. The construction follows the tensor-product principle of BGS~\cite{BGS2017}; here we retain the integer local code, degree bound, and leaf terminals at each step. The complementary colored cover of Definition~\ref{def:cover} specifies the attachments. The quantitative bound in Lemma~\ref{lem:cover} determines the final gap as a function of network size.

\subsection{The degree-preserving tensor construction}\label{sec:tensor}
For $i=1,2$, let $\mathcal S_i$ be a scalar instance with parameters $(k_i,m_i,b_i,T_i)$, vertex count $n_i$, and certified ceiling $\rho_i$. Use $\mathcal S_1$ as the outer instance and $\mathcal S_2$ as the inner instance. Choose the cover in Lemma~\ref{lem:cover} with
\[
 d=m_1,\qquad k=k_2,\qquad g=2b_1b_2.
\]
Each outer physical edge has its fixed direction from the unique scalar-code event. Only that direction is used in the attachment construction; the resulting graph is still undirected.

Take one outer vertex set for each left cover vertex $l$ and one inner graph for each right cover vertex $z$. An incidence of colors $(e,i)$ attaches inner source $s_i$ to the tail of outer edge $e$ in copy $l$, and inner sink $t_i$ to its head (Fig.~\ref{fig:tensor}). Remove all outer edges and retain all inner physical edges. Keep the copied outer terminal pairs as the new demands.

Every outer copy carries a type $i\in[k_2]$ and is served by session $i$ of each inner copy it touches; every inner copy carries a type $e\in[m_1]$ and replaces the outer edge $e$ in each of the $k_2$ outer copies it touches, one per session.

\begin{figure}[t]
\centering
\begin{tikzpicture}[x=0.79cm,y=0.65cm,>=Stealth,
  every node/.style={font=\small},term/.style={circle,draw,inner sep=1.6pt}]
\node[term,label=above left:{$u_l$}] (u1) at (0,1.7) {};
\node[term,label=above right:{$v_l$}] (v1) at (4,1.7) {};
\node[term,label=below left:{$u_{l'}$}] (u2) at (0,-1.7) {};
\node[term,label=below right:{$v_{l'}$}] (v2) at (4,-1.7) {};
\draw[->,dashed] (u1)--node[above]{$e$ (removed)}(v1);
\draw[->,dashed] (u2)--node[below]{$e$ (removed)}(v2);
\draw[rounded corners] (0.8,-1.05) rectangle (3.2,1.05);
\node[term,label=right:{$s_i$}] (s1) at (1,0.65) {};
\node[term,label=left:{$t_i$}] (t1) at (3,0.65) {};
\node[term,label=right:{$s_{i'}$}] (s2) at (1,-0.65) {};
\node[term,label=left:{$t_{i'}$}] (t2) at (3,-0.65) {};
\node[align=center] at (2,0) {inner copy $z$};
\draw[dotted,thick] (u1)--(s1);
\draw[dotted,thick] (v1)--(t1);
\draw[dotted,thick] (u2)--(s2);
\draw[dotted,thick] (v2)--(t2);
\end{tikzpicture}
\caption{Two incidences attached to one inner copy of right type $e$. Dotted connectors identify terminal vertices with outer endpoints; they are contracted and consume no physical edges. The old dashed edges are removed. The inner graph's physical edges, not shown, are retained. All sessions assigned to this inner copy carry the same edge event in different outer copies.}
\label{fig:tensor}
\end{figure}

Every inner terminal participates in exactly one incidence. Consequently the identifications consist of stars centered at outer vertices, together with isolated inner nonterminals. Distinct outer vertices never merge. Within one inner copy, different sessions attach to different outer copies because the incidence graph is simple; the two endpoints of one attachment are distinct because the outer edge has no loop. Hence every inner-copy vertex map is injective.

Writing $\phi$ for the common fiber size in~\eqref{eq:covercounts}, the exact counts are
\begin{equation}\label{eq:tensorcounts}
 k_+=k_1k_2\phi,\qquad m_+=m_1m_2\phi,
 \qquad b_+=b_1b_2.
\end{equation}
The distance assertion is proved below. The vertex count is
\begin{equation}\label{eq:tensorvertices}
 n_+=|V_L|n_1+|V_R|(n_2-2k_2),
\end{equation}
since the remaining vertices are the outer vertices and the private inner nonterminals.

The new graph is simple. Each retained edge has a private inner nonterminal endpoint, which determines its inner copy. Two edges with the same unordered endpoints therefore come from the same copy, where simplicity and injectivity distinguish them. Connectivity follows because an inner copy connects the two ends of every outer edge it replaces, and the connected incidence graph joins all copies.

At an outer vertex, each old incident edge is replaced by exactly one incident inner leaf edge. Its degree is unchanged. An inner nonterminal keeps its old degree. Thus maximum degree three and the distinct leaf terminals persist. Every retained edge still has a nonterminal endpoint in the new graph.

\subsection{The composed code and its distance bound}
All incidences at a fixed right vertex have the same outer-edge color $e$. The unique event using that edge has one fixed outer time $\theta_e$ across the outer copies. Therefore the entire inner copy can run once, in the interval beginning at $T_2\theta_e$, with its sessions carrying the appropriate outer payloads. An inner event of time $\tau$ is sent at time $T_2\theta_e+\tau$.

When outer round $\theta_e$ begins, each source of an inner session is the identified outer sender and knows that payload from its local outer history. The inner code decodes every assigned payload by the end of the interval. Its identities remain valid when these inputs are correlated integer forms in the original messages. Strict inner predecessors remain earlier within the interval, and outer predecessors finish in earlier intervals. Induction on outer time gives the original outer decoder at every new sink.

Each inner copy executes once. Since its scalar code uses each of its physical edges once, the complete tensor code uses every new edge once. Different copies have different physical edges. The result is an exact integer scalar code of duration $T_1T_2$, with fixed reference directions on the retained edges.

\begin{lemma}\label{lem:tensordistance}
Every undirected own-pair walk in the tensor graph has at least $b_1b_2$ edges.
\end{lemma}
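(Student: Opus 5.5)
Fix an outer copy $l$ of type $i$ and an own-pair walk $W$ from its source to its sink; I will decompose $W$ into excursions through inner copies. The vertices shared by different inner copies are exactly the outer vertices (centres of the identification stars). Every edge of $W$ is a retained inner edge, and it has a private inner nonterminal endpoint that determines its copy. So $W$ splits into maximal segments, each lying inside a single inner copy $z$ and running between two outer vertices, which are images of terminals of $z$. Each segment therefore projects to a walk in $\mathcal S_2$ between two terminals. If a segment starts and ends at the same outer vertex, I delete it; this only shortens $W$, so it suffices to bound walks with no such segments.

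Next I classify the remaining segments. A segment in $z$ whose endpoints are the two terminals $s_j,t_j$ of one session $j$ is a \emph{good} segment. It has at least $b_2$ edges by condition (2) of Definition~\ref{def:scalar} for $\mathcal S_2$, and it corresponds to traversing outer edge $e$ (the type of $z$) inside the outer copy attached to session $j$. Every other segment joins terminals of different sessions, and so it moves between different outer copies along the cover. Each segment, good or bad, has at least one edge. The sequence of segments then induces a closed-or-open walk in the incidence graph: a bad segment in $z$ goes from left vertex $l'$ to $z$ to left vertex $l''$ with $l'\neq l''$. A good segment stays at the same left vertex and records one outer step along edge $e$ in that copy.

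The key step is to show that $W$ contains at least $b_1$ good segments, all in copy $l$, that form an outer walk from $s$ to $t$. I would argue via girth. Suppose the projected cover walk leaves $l$ and returns. It traces a closed walk in the cover, and each bad segment costs at least one edge of $W$. If it is non-backtracking and has length below $g=2b_1b_2$, girth forbids it, so the closed walk must contain a backtrack $l'\to z\to l'$ or $z\to l'\to z$. A backtrack $z\to l'\to z$ means leaving $z$ at an outer vertex of copy $l'$ and re-entering the same $z$. Between those moments the walk makes only good moves in $l'$, or none, and I would splice these into the segment inside $z$. A backtrack $l'\to z\to l'$ is a segment in $z$ between two terminals attached to the same outer copy, so it is a session-$j$ segment, i.e.\ good, or else a loop. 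So I would prove, by induction on $W$'s length, that every excursion from copy $l$ of fewer than $g$ edges can be replaced by a no-longer walk inside copy $l$ built from good segments only. Any excursion of at least $g\geq b_1b_2$ edges already gives the bound directly. After these reductions, $W$ consists of good segments in copy $l$. These project to an outer own-pair walk in $\mathcal S_1$, which has at least $b_1$ edges, so $|W|\geq b_1b_2$.

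The main obstacle is making the excursion-reduction rigorous. A cover walk that leaves $l$ through some copies and returns produces a cycle unless it reduces by backtracking, and the backtracks in the middle may involve good moves in other copies. Those good moves must be shown to be absorbable, or at least not to create savings. I would handle this with a potential argument. Project the whole walk to the cover together with the outer positions, and use that any cover-closed subwalk of length below $g$ reduces to a tree walk. A tree walk's net outer displacement in the copies it visits is realized by segments of the same inner copies between the same outer vertices, and that realization is no cheaper than a direct good segment.
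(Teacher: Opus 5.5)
Your decomposition into maximal single-copy segments, the good/bad classification, and the girth step are sound. The route, walk surgery, genuinely differs from the paper's. However, the step you flag as the main obstacle is not proved. Nothing in the proposal shows that splicing a backtrack $z\to l'\to z$ does not lengthen the walk, and the ``potential argument'' in your last paragraph does not supply this. Without it, a splice could trade cheap good moves for an expensive path in $z$, and the reduction has no length control.

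\textbf{The missing idea.} It is a homogeneity property of the cover. Every incidence at a left vertex $l'$ of type $i'$ has left color $i'$. So every good move inside outer copy $l'$, in whatever inner copy, is a walk in a copy of $\mathcal S_2$ between the terminals of session $i'$. It therefore costs at least $\delta_{i'}=\dist_{\mathcal S_2}(s_{i'},t_{i'})$.

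At the backtrack, you leave and re-enter $z$ through the unique incidence $(l',z)$, hence at $s_{i'}$ or $t_{i'}$ of $z$.
\begin{itemize}
\item If the exit and re-entry terminals coincide, the intervening good moves form a closed subwalk, which you delete.
\item Otherwise there is at least one good move. Replacing all of them by a shortest $s_{i'}$--$t_{i'}$ path in $z$ costs exactly $\delta_{i'}$.
\end{itemize}
So each splice is length non-increasing and strictly lowers the number of bad segments.

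\textbf{Finishing the argument.} With this in hand, drop the excursion-by-excursion induction and its threshold, which is also misstated. With only one edge per segment, an excursion of fewer than $g$ edges can have projected length close to $2g$ and need not backtrack. It is rescued only by noting that every segment has at least two edges, since no edge joins two terminals. Instead, splice exhaustively.

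The bad-segment projection is then a closed walk at $l$ with no backtrack: $l'\to z\to l'$ cannot come from a bad segment, and $z\to l'\to z$ has been removed.
\begin{itemize}
\item If the projection is empty, the walk is a chain of good segments in copy $l$. It contracts to an own-pair walk of $\mathcal S_1$, giving at least $b_1$ segments of at least $b_2$ edges each.
\item If it is nonempty, its support is not a forest. So it traverses a cycle of length at least $2b_1b_2$, forcing at least $b_1b_2$ bad segments.
\end{itemize}

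\textbf{Comparison with the paper.} The paper performs no surgery. It lifts to the connector-expanded graph and erases loops to get a simple path. When the incidence trace contains a cycle, it uses $\kappa\le 2p$. In the tree case, each incidence edge is crossed exactly twice, once per connector, which yields nested excursions. It then proves by induction only the lower bound $b_2$ per inner excursion: a nonroot outer child between distinct vertices must contain an inner grandchild. That argument never compares costs across sessions.

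Your route works directly with walks and yields a stronger fact: each root excursion costs at least the true inner distance of the session it replaces. The price is that it relies on the same-session homogeneity and on an explicit length comparison. That comparison must be stated and proved, not deferred.
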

\begin{proof}
Expand the identifications into zero-length connectors between outer vertices and inner terminals. Lift a candidate walk and erase loops to obtain a simple expanded path without increasing its physical-edge count $p$. Record the sequence of copies it visits; this closed walk in the cover, based at the outer copy containing both endpoints, is its incidence trace, and the set of cover edges it uses is the trace support. Each passage through an inner copy requires a physical edge between its two connector crossings, so the number $\kappa$ of crossings satisfies $\kappa\leq2p$.

If the trace support contains a cycle, its length is at least $2b_1b_2$ by the cover girth. Thus $\kappa\geq2b_1b_2$ and $p\geq b_1b_2$.

If the trace support is a tree, root it at the endpoint outer copy. Every used incidence edge is crossed twice, once at each of its two distinct connectors. The path therefore consists of nested excursions. An inner excursion either travels directly between its assigned own-pair terminals within that inner copy, costing at least $b_2$, or contains a deeper inner excursion that already costs at least $b_2$. Induction on excursion depth establishes this bound in all cases. At the root, contracting each maximal inner excursion to its outer edge gives an own-pair walk in the outer graph. It has at least $b_1$ edges. These maximal excursions are disjoint path portions, each of cost at least $b_2$, giving the claimed product. Appendix~\ref{app:excursions} proves the nesting and induction in detail.
\end{proof}

\begin{proposition}[Scalar tensor]\label{prop:tensor}
The construction gives a scalar instance with duration $T_1T_2$, the counts in~\eqref{eq:tensorcounts}, and certified ceiling
\begin{equation}\label{eq:ceilingproduct}
 \rho_+=\rho_1\rho_2.
\end{equation}
\end{proposition}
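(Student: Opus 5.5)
The plan is to check, item by item, that the tensor graph with its composed code satisfies Definition~\ref{def:scalar}, and then to compute the ceiling from the exact counts. Almost everything has already been established in the preceding subsections, so the proof mainly assembles those facts in order.

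\textbf{Structural conditions.} The structural discussion after~\eqref{eq:tensorvertices} gives item~1 of Definition~\ref{def:scalar}.
\begin{itemize}
\item The graph is simple, since each retained edge has a private inner nonterminal endpoint and inner-copy vertex maps are injective.
\item It is connected, because the incidence graph from Lemma~\ref{lem:cover} is connected and every inner copy joins the ends of each outer edge it replaces.
\item Degrees are preserved: outer vertices exchange each old edge for one inner leaf edge, and inner nonterminals keep their degree.
\item The new terminals are the copied outer terminals, which are distinct leaves.
\item Every retained edge keeps a nonterminal endpoint.
\end{itemize}
Item~2, with $b_+=b_1b_2$, is exactly Lemma~\ref{lem:tensordistance}. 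Item~3 is the composed code. Each inner copy of right type $e$ runs once in the window $[T_2\theta_e,T_2\theta_e+T_2)$. Induction on outer time shows every inner source knows its outer payload when the window opens, using that inner identities hold for correlated integer inputs. This gives an exact integer scalar code of duration $T_1T_2$. Since every new edge belongs to exactly one inner copy, which runs once, each edge is used exactly once.

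\textbf{Counts.} Next I verify~\eqref{eq:tensorcounts} from the fiber structure~\eqref{eq:covercounts}.
\begin{itemize}
\item Outer copies number $|V_L|=k_2\phi$, each carrying $k_1$ sessions, so $k_+=k_1k_2\phi$.
\item Inner copies number $|V_R|=m_1\phi$, each contributing $m_2$ edges, so $m_+=m_1m_2\phi$. No edges are merged, since connectors consume no physical edges and distinct copies have distinct edges.
\end{itemize}
Here $\phi\geq1$ because the cover is nonempty and connected.

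\textbf{Ceiling.} The computation is then immediate:
\[
 \rho_+=\frac{m_+}{k_+b_+}=\frac{m_1m_2\phi}{k_1k_2\phi\,b_1b_2}=\frac{m_1}{k_1b_1}\cdot\frac{m_2}{k_2b_2}=\rho_1\rho_2 .
\]

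\textbf{Main obstacle.} The only substantive step is the distance bound, which is already isolated as Lemma~\ref{lem:tensordistance}; the rest is bookkeeping. If I had to supply that lemma myself, the hard part would be the tree case. There I would proceed as follows:
\begin{itemize}
\item Root the trace support at the endpoint copy.
\item Show that each traversal into an inner copy through a connector must exit through a connector of the same incidence, since the tree structure forbids any other return. This makes excursions nest.
\item Induct on depth to prove each excursion costs at least $b_2$. This uses item~2 of $\mathcal S_2$ and treats deeper excursions as shortcut edges of the outer graph.
\item Apply item~2 of $\mathcal S_1$ at the root.
\end{itemize}
The cycle case is handled by the girth choice $g=2b_1b_2$ together with $\kappa\leq2p$.
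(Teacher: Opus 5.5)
Your proposal is correct and takes essentially the same route as the paper: its proof assembles the topology and composed-code facts established before the proposition, cites Lemma~\ref{lem:tensordistance} for $b_+=b_1b_2$, and cancels $\phi$ in $m_+/(k_+b_+)$. Your count check via $|V_L|=k_2\phi$ and $|V_R|=m_1\phi$ agrees with the paper. So does your sketch of the distance lemma: the girth choice with $\kappa\leq 2p$ in the cyclic case, and nested excursions with induction in the tree case.
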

\begin{proof}
The topology and scalar code were established above. Lemma~\ref{lem:tensordistance} supplies the distance field. Substitution of~\eqref{eq:tensorcounts} in $m_+/(k_+b_+)$ cancels $\phi$ and gives~\eqref{eq:ceilingproduct}.
\end{proof}

\subsection{Iteration and the logarithmic gap}\label{sec:growth}
Starting with the seed $\mathcal S_0$ of Theorem~\ref{thm:seed}, define $\mathcal S_{j+1}$ by tensoring $\mathcal S_j$ with itself. Let its parameters be $k_j,m_j,b_j,T_j$ and its vertex count be $v_j$. Equation~\eqref{eq:ceilingproduct} and the distance recurrence give
\begin{equation}\label{eq:recurrences}
 \rho_j=\rho_0^{2^j},\qquad b_j=b_0^{2^j},
 \qquad T_j=T_0^{2^j}.
\end{equation}
In particular every routing rate is at most $\rho_0^{2^j}$, while Proposition~\ref{prop:capacity} applies to every $\mathcal S_j$.

It remains to bound the vertex count. Since every terminal is a leaf and every edge has a nonterminal endpoint, distinct terminals have distinct incident edges; hence $2k_j\leq m_j$. The cover bound~\eqref{eq:coverbound} therefore gives
\begin{equation}\label{eq:mbound}
 m_{j+1}\leq m_j^2\left(2(m_j+1)^{2b_j^2}+1\right).
\end{equation}
For positive $m_j,b_j$, it follows that
\begin{equation}\label{eq:mbound2}
 3m_{j+1}\leq(3m_j)^{4b_j^2}.
\end{equation}
Indeed the left side is at most $9m_j^2(m_j+1)^{2b_j^2}$, bounded by $(3m_j)^{2+2b_j^2}$, and $2+2b_j^2\leq4b_j^2$.

Put
\begin{equation}\label{eq:constants}
 \begin{gathered}
 \Gamma=\max\{3m_0,b_0\}+1,\qquad \Lambda=4\Gamma^2,\\
 c=\frac{\log(1/\rho_0)}{\log\Lambda}>0.
 \end{gathered}
\end{equation}
We claim
\begin{equation}\label{eq:sizebound}
 3m_j\leq2^{\Lambda^{2^j}}.
\end{equation}
For $j=0$, $3m_0\leq\Lambda\leq2^\Lambda$. Further,
\[
 4b_j^2=4b_0^{2\cdot2^j}\leq(4b_0^2)^{2^j}
 \leq\Lambda^{2^j}.
\]
Using this inequality in~\eqref{eq:mbound2}, the induction step raises $2^{\Lambda^{2^j}}$ to a power at most $\Lambda^{2^j}$, giving $2^{\Lambda^{2^{j+1}}}$.

Every connected nontrivial graph has $v_j\leq m_j+1\leq2m_j$, so~\eqref{eq:sizebound} also bounds $v_j$. Because $\log2<1$,
\begin{equation}\label{eq:logpower}
 (\log v_j)^c\leq(\Lambda^{2^j})^c
 =(1/\rho_0)^{2^j}.
\end{equation}
Combining with~\eqref{eq:recurrences} proves
\begin{equation}\label{eq:pre-girth-gap}
 R_\rt(\mathcal S_j)(\log v_j)^c\leq1.
\end{equation}

The vertex counts are unbounded. The inequality $2k_0\leq m_0<k_0b_0$ implies $b_0>2$. A shortest own-pair path has at most $v_j-1$ edges and at least $b_j$ edges, so $v_j>b_0^{2^j}>j$. This also confirms that all logarithms used above are positive.

\section{Bipartite Families with Polynomial Girth}\label{sec:girth}
The incidence girth in the tensor construction controls source--sink distances. It does not bound every physical cycle, because cycles inside an inner copy remain. We now subdivide the amplified instances. Even replacement lengths give bipartiteness, and a suitable length raises the physical girth to a power of the new vertex count.

\subsection{Subdividing the physical edges}
Long replacement paths increase propagation delay, but a scalar protocol still uses each new edge only once. They therefore preserve the limiting coding rate and let us control the lengths of physical cycles.

\begin{lemma}[Uniform subdivision]\label{lem:subdivision}
Let a scalar instance have $n$ vertices and parameters $(k,m,b,T)$. Replacing every edge by a private path of $\lambda\geq2$ edges gives a scalar instance with
\begin{equation}\label{eq:subcounts}
 \begin{gathered}
 n'=n+(\lambda-1)m,\qquad m'=\lambda m,\\
 b'=\lambda b,\qquad T'=T(m\lambda+1),
 \end{gathered}
\end{equation}
and the same ceiling $m'/(kb')=m/(kb)$. Every cycle has at least $\lambda$ edges. If $\lambda$ is even, the new graph is bipartite.
\end{lemma}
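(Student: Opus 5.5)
The plan is to verify the fields of Definition~\ref{def:scalar} one at a time for the subdivided graph $G'$, reusing the relay argument from the proof of Lemma~\ref{lem:normalize}. Here it is applied with uniform length $\lambda$ in place of $2\ell'_e$.

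\textbf{Counts and topology.} Each old edge contributes $\lambda$ new edges and $\lambda-1$ new interior vertices. This gives $n'=n+(\lambda-1)m$ and $m'=\lambda m$ directly. Subdivision preserves connectivity and simplicity: since $\lambda\geq2$, no two replacement paths share both endpoints, and each is a path between distinct vertices. Interior vertices have degree two, and old vertices keep their degrees, so the graph remains subcubic and the terminals remain distinct leaves. Every new edge has an interior, hence nonterminal, endpoint because $\lambda\geq2$.

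\textbf{Distances.} Any own-pair walk in $G'$ between old terminals enters the interior of a replacement path only from one of its ends. Such an entry either traverses the whole path, or returns to the same end after a wasted excursion. Deleting the excursions and collapsing each full traversal gives an own-pair walk in $G$ with at least $b$ edges. Each of its edges costs $\lambda$ edges in $G'$, so $b'=\lambda b$, and the ceiling $m'/(kb')=m/(kb)$ is unchanged.

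\textbf{Cycles and bipartiteness.} A cycle in $G'$ cannot stay inside the interior of one replacement path, since that interior induces a path. So every cycle of $G'$ consists of whole replacement paths along a closed trail of $G$ that contains at least one edge. If that trail uses at least two distinct old edges, the cycle has at least $2\lambda$ edges. A single old edge cannot form a cycle, because $G$ has no loops. Hence every cycle has at least $\lambda$ edges. For even $\lambda$, colour each vertex by the parity of its distance along its replacement path from a fixed endpoint, giving old vertices colour $0$. This is consistent because $\lambda$ is even, and every edge of $G'$ joins the two parities. Equivalently, every cycle has length $\lambda$ times its number of old edges, which is even.

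\textbf{Code.} I would reuse the relay schedule with $s_*=m\lambda+1$. An old event at time $t$ on edge $e$ becomes successive sends along the replacement path, in the reference direction, at times $s_*t,\dots,s_*t+\lambda-1$. Interior vertices simply forward the symbol. Predecessors from time $t'<t$ arrive by $s_*t'+\lambda-1<s_*t$, so every local expression is available when it is needed, and each new edge is used exactly once. Exactness holds by induction on old time, with unchanged decoders and integer coefficients. The final send occurs before $s_*(T-1)+\lambda\leq T(m\lambda+1)$, giving $T'=T(m\lambda+1)$.

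The only delicate point is the distance and cycle reduction, namely erasing partial excursions into path interiors. This is the same collapsing argument as in Lemma~\ref{lem:normalize}, so I expect no real obstacle.
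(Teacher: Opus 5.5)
Your proposal is correct and follows essentially the same route as the paper: it reuses the relay schedule of Lemma~\ref{lem:normalize} with spacing $m\lambda+1$ for the code, collapses full traversals of replacement paths to get the distance factor $\lambda$, uses the degree-two interior vertices to show that every cycle contains a whole replacement path, and colours vertices by positional parity for bipartiteness. One small wording point: $G'$ has no parallel edges because each new edge has a private interior endpoint once $\lambda\geq2$, whereas distinct replacement paths having distinct endpoint pairs comes from the simplicity of $G$, not from $\lambda\geq2$.
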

\begin{proof}
The relay schedule used in Lemma~\ref{lem:normalize}, now with common replacement length $\lambda$ and spacing $m\lambda+1$, gives an exact scalar code using every new edge once and duration $T(m\lambda+1)$. Every terminal-to-terminal simple path contracts to an old path and traverses full replacement paths, giving the distance factor $\lambda$. The counts are immediate; simplicity, connectivity, degree bounds, and leaf terminals persist, and every new edge has a private interior endpoint.

A cycle cannot stay inside a replacement path, which is a path with private degree-two internal vertices. Once a cycle uses one of its internal vertices, it must continue through the path to an old endpoint in both directions; it therefore includes a whole replacement path and has length at least $\lambda$. For even $\lambda$, color every old vertex with color zero and every subdivision vertex by the parity of its position along its replacement path. Both endpoints receive color zero consistently, and every edge changes color.
\end{proof}

\subsection{The simultaneous family theorem}
The full result combines the finite counterexample, its alphabet-independent scalar representation, the tensor growth estimate, and the subdivision bound.

\begin{theorem}\label{thm:main}
There is a constant $c>0$ with the following property. For every $\beta\in(0,1)$ there are a constant $K_\beta>0$ and networks $G_0,G_1,\ldots$, with $n_j=|V(G_j)|>j$ and $k_j>0$ sessions, such that:
\begin{enumerate}
\item $G_j$ is connected, simple, undirected, and bipartite. Every physical edge has shared capacity one, every vertex has degree at most three, and the $2k_j$ terminals are pairwise distinct leaves.
\item Every cycle of $G_j$ has at least $n_j^\beta$ edges, and
\begin{equation}\label{eq:main-routing}
 R_\rt(G_j)(\log n_j)^c\leq K_\beta.
\end{equation}
\item For some positive integer $T'_j$, every positive integer $B$ has an exact causal integer-coefficient block protocol of duration $B+T'_j$, delivering $B$ symbols per session. The graphs, schedules, and integer coefficients are chosen independently of the alphabet.
\item For every nontrivial finite abelian group $A$, the supremum of deterministic symbol-code common rates on $G_j$ is one and
\[
 \cl\mathcal R_\sym(G_j;A)=[0,1]^{k_j}.
\]
For every finite field $\F$, the supremum of field-linear common rates is one and
\[
 \cl\mathcal R_\lin(G_j;\F)=[0,1]^{k_j}.
\]
\item For some positive integer $B$, the protocol of item~3 has common rate $B/(B+T'_j)>R_\rt(G_j)$.
\end{enumerate}
\end{theorem}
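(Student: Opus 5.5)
Take $G_j$ to be the uniform subdivision (Lemma~\ref{lem:subdivision}) of the amplified instance $\mathcal S_j$ from Section~\ref{sec:growth}, with $c$ the constant in~\eqref{eq:constants}. The replacement length is the even integer
\[
\lambda_j=2\lceil v_j^{\gamma}\rceil,
\]
where the exponent $\gamma$ depends only on $\beta$ and is chosen below.

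\textbf{Items 1, 3, 4, 5.} Lemma~\ref{lem:subdivision} turns $G_j$ into a scalar instance with the same certified ceiling $\rho_j=\rho_0^{2^j}$. It is bipartite because $\lambda_j$ is even. Simplicity, connectivity, subcubicity, distinct leaf terminals, and the nonterminal-endpoint condition are all inherited. Item~3 follows from Lemma~\ref{lem:pipeline}, with $T'_j$ the subdivided duration. Item~4 is Proposition~\ref{prop:capacity} applied to this scalar instance. For item~5, repeat the argument in the proof of Theorem~\ref{thm:finite}. Set $B=m'T'_j+1$ and use $m'<k'b'$ to get
\[
\frac{B}{B+T'_j}>\rho\geq R_\rt(G_j),
\]
where the last step is~\eqref{eq:hopceiling}. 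Finally, $n_j\geq v_j>j$ by the unboundedness remark at the end of Section~\ref{sec:growth}.

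\textbf{Girth.} By Lemma~\ref{lem:subdivision}, every cycle has at least $\lambda_j$ edges. Since $m_j\leq v_j^2$ (or $m_j\le 3v_j/2$ by subcubicity), we have $n_j=v_j+(\lambda_j-1)m_j\leq C\,v_j^{1+\gamma}$ for an absolute constant $C$. We need $\lambda_j\geq n_j^\beta$. Because $\lambda_j\geq 2v_j^\gamma$, it suffices that
\[
2v_j^{\gamma}\geq C^\beta v_j^{\beta(1+\gamma)}.
\]
Choose $\gamma$ with $\gamma>\beta(1+\gamma)$, for instance $\gamma=2\beta/(1-\beta)$, which is possible exactly because $\beta<1$. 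The inequality then holds once $v_j$ is large. For the finitely many small $j$, either increase $\lambda_j$ by a further even amount, which is harmless because the ceiling is unchanged, or absorb them by shifting the index. Note that $v_j>j$ guarantees $v_j\to\infty$.

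\textbf{Routing bound (main step).} Compare $\log n_j$ with $\log v_j$. Since $n_j\leq C v_j^{1+\gamma}$ and $v_j\geq 2$,
\[
\log n_j\leq (1+\gamma+\log C/\log 2)\log v_j=: \kappa_\beta\log v_j.
\]
Then
\[
R_\rt(G_j)(\log n_j)^c\leq \kappa_\beta^c\,\rho_0^{2^j}(\log v_j)^c\leq\kappa_\beta^c
\]
by~\eqref{eq:logpower} and~\eqref{eq:pre-girth-gap}, so $K_\beta=\kappa_\beta^c$ works and $c$ is independent of $\beta$.

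The main obstacle is making the girth and the logarithm comparison hold for all $j$ simultaneously, in particular the small-$j$ cases. I would handle this by taking $\lambda_j=2\lceil C' v_j^{\gamma}\rceil$ with a large constant $C'$ depending on $\beta$, so that the girth inequality holds for every $v_j\geq 2$. This $C'$ enters $\kappa_\beta$ only logarithmically, so it only changes the constant $K_\beta$.
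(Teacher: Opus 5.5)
Your proposal is correct and follows essentially the paper's proof: subdivide each iterate $\mathcal S_j$ by an even length polynomial in $v_j$ with exponent at least $\beta/(1-\beta)$, then transfer $\log n_j\le\kappa_\beta\log v_j$, and finish with the pipeline lemma, Proposition~\ref{prop:capacity}, and the finite-block argument. The only difference is in constants: the paper takes $\lambda=2\lceil(2n)^{\beta/(1-\beta)}\rceil$, which makes $(2n\lambda)^\beta\le\lambda$ hold for every $n$, so it needs no separate small-$j$ adjustment like your constant $C'$.
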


To choose the subdivision length, consider a scalar iterate with $n\geq2$ vertices and $m$ edges. Write $n'$ for its vertex count after subdivision, as in Lemma~\ref{lem:subdivision}. Fix $\beta\in(0,1)$ and put
\begin{equation}\label{eq:subparams}
 p=\frac1{1-\beta},\qquad a=\frac\beta{1-\beta},
 \qquad \lambda=2\left\lceil(2n)^a\right\rceil.
\end{equation}
This is an even integer at least two. The subcubic handshake bound is $2m\leq3n$, and hence
\begin{equation}\label{eq:subsize}
 n'=n+(\lambda-1)m\leq2n\lambda.
\end{equation}
As $\lambda\geq(2n)^a$, we have $(2n)^\beta\leq \lambda^{1-\beta}$. Therefore
\begin{equation}\label{eq:girthpower}
 (n')^\beta\leq(2n\lambda)^\beta\leq \lambda.
\end{equation}
Lemma~\ref{lem:subdivision} now gives girth at least $(n')^\beta$.

The choice of $\lambda$ also preserves the logarithmic exponent. Since $(2n)^a\geq1$, the ceiling in~\eqref{eq:subparams} yields $\lambda\leq4(2n)^a$, and thus
\begin{equation}\label{eq:polynomialsize}
 n'\leq \eta_\beta n^p,\qquad \eta_\beta=8\cdot2^a.
\end{equation}
For $n\geq2$, taking logarithms gives
\begin{equation}\label{eq:logtransfer}
 \log n'\leq \theta_\beta\log n,\qquad
 \theta_\beta=p+\frac{\log \eta_\beta}{\log2}.
\end{equation}
For the subdivision $G_j$ of $\mathcal S_j$, the unchanged certificate gives $R_\rt(G_j)\leq\rho_j$. Equations~\eqref{eq:recurrences} and~\eqref{eq:logpower} give $\rho_j(\log v_j)^c\leq1$. Multiplying this inequality by $\theta_\beta^c$ and using~\eqref{eq:logtransfer} proves
\[
 R_\rt(G_j)(\log |V(G_j)|)^c\leq \theta_\beta^c.
\]
Set $K_\beta=\theta_\beta^c>0$.

\begin{proof}[Proof of Theorem~\ref{thm:main}]
Choose the integer seed once by Theorem~\ref{thm:seed}, and define $c$ by~\eqref{eq:constants}. Construct the scalar iterates $\mathcal S_j$ by Proposition~\ref{prop:tensor}. For each fixed $\beta$, subdivide $\mathcal S_j$ by the even length~\eqref{eq:subparams}, with $n=v_j$, to obtain $G_j$. Lemma~\ref{lem:subdivision} proves its scalar-instance properties and bipartiteness. Equations~\eqref{eq:girthpower} and~\eqref{eq:logtransfer} prove the physical girth and routing bound. Its vertex count is at least $v_j>j$.

Apply Lemma~\ref{lem:pipeline} to each final scalar code, whose duration $T'_j=T_j(m_j\lambda+1)$ is given by~\eqref{eq:subcounts}. This yields the $B+T'_j$ protocols in item~3. Proposition~\ref{prop:capacity} proves item~4. Since every graph, cover, relay schedule, and integer coefficient was selected before an alphabet was interpreted, all of these conclusions hold on the same family for every group and field.

For each $j$, the retained routing ceiling is $\rho_j<1$. Choosing $B>\rho_jT'_j/(1-\rho_j)$ gives $B/(B+T'_j)>\rho_j\geq R_\rt(G_j)$. This proves item~5.
\end{proof}

A longer subdivision gives a single family whose girth is nearly linear in the number of vertices, at the cost of halving the exponent.

\begin{corollary}[A single family with near-linear girth]\label{cor:nearlinear}
There are constants $c'>0$ and $K>0$ and networks $G'_0,G'_1,\ldots$ with $n'_j=|V(G'_j)|>j$ that satisfy items~1, 3, 4, and~5 of Theorem~\ref{thm:main}, such that every cycle of $G'_j$ has at least $n'_j/(2\exp\sqrt{\log n'_j})$ edges and $R_\rt(G'_j)(\log n'_j)^{c'}\leq K$. In particular, for every fixed $\beta<1$, every cycle of $G'_j$ has at least $(n'_j)^\beta$ edges for all sufficiently large $j$.
\end{corollary}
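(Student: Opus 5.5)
We reuse the scalar iterates $\mathcal S_j$ and their certificates exactly as in the proof of Theorem~\ref{thm:main}. The only change is the subdivision length, which will now grow almost linearly rather than polynomially. Concretely, for an iterate with $n=v_j\geq2$ vertices we take
\[
 \lambda=2\left\lceil n\exp\sqrt{\log n}\right\rceil ,
\]
which is even and at least two. Lemma~\ref{lem:subdivision} then gives a scalar instance $G'_j$ with several properties: it is bipartite, connected, simple, subcubic, and has leaf terminals; its ceiling $\rho_j$ is unchanged; and it has a one-use relay code of duration $T_j(m_j\lambda+1)$. Items~3, 4, and~5 follow verbatim as in the proof of Theorem~\ref{thm:main}, via Lemma~\ref{lem:pipeline} and Proposition~\ref{prop:capacity}, together with a choice $B>\rho_jT'_j/(1-\rho_j)$. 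The bound $n'_j\geq v_j>j$ is also inherited.

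\textbf{Girth.} The handshake bound gives $n'\leq 2n\lambda$, as in~\eqref{eq:subsize}. We need
\[
 \lambda\geq n'/(2\exp\sqrt{\log n'}),
\]
and it suffices to have $2\exp\sqrt{\log n'}\geq 2n$, i.e.\ $\sqrt{\log n'}\geq\log n$. This fails as stated, so we adjust the calibration by writing $\lambda$ in terms of the target girth directly. Since $\lambda$ is roughly $n\cdot e^{\sqrt{\log n}}$, we have $n'\approx n^2e^{\sqrt{\log n}}$, and the required ratio is $n'/\lambda\approx n$. We therefore need $n\leq 2\exp\sqrt{\log n'}$, which forces $\log n'\gtrsim(\log n)^2$. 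So the correct choice is $\lambda=2\lceil\exp((\log n)^2)\rceil$, a quasi-polynomial subdivision.

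With this choice, $n'\leq 4n\exp((\log n)^2)+2n\leq\exp(2(\log n)^2)$ for $n\geq$ some constant (small $n$ are absorbed by adjusting constants, or by starting indices). Hence $\log n'\leq 2(\log n)^2$. Conversely, $\log n'\geq(\log n)^2$, so $\sqrt{\log n'}\geq\log n$ and $2\exp\sqrt{\log n'}\geq 2n$. Then
\[
 n'/(2\exp\sqrt{\log n'})\leq 2n\lambda/(2n)=\lambda,
\]
which is at most the girth. The ``in particular'' clause follows because $\exp\sqrt{\log x}=x^{o(1)}$.

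\textbf{Routing bound.} From $\log n'\leq2(\log n)^2$ we get $\log n\geq\sqrt{\log n'/2}$. Combining with~\eqref{eq:pre-girth-gap},
\[
 R_\rt(G'_j)\leq(\log v_j)^{-c}\leq 2^{c/2}(\log n'_j)^{-c/2}.
\]
So we set $c'=c/2$ and $K=2^{c/2}$, which explains the ``halving the exponent.''

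The main obstacle is the bookkeeping for small $n$ in the inequality $n'\leq\exp(2(\log n)^2)$. I would handle it by verifying the inequality for all $n\geq n_0$, using $4n+2n\leq\exp((\log n)^2)$ for $n$ moderately large. For the finitely many small $v_j$, I would either enlarge $K$ or shift the family index, noting that $v_j\to\infty$ since $v_j>b_0^{2^j}$.
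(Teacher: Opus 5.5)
Your final argument is correct and is essentially the paper's proof: the same quasi-polynomial length $\lambda=2\lceil\exp((\log v_j)^2)\rceil$, the sandwich $\lambda\le n'_j\le 2v_j\lambda$ for the girth bound, and $c'=c/2$ via $R_\rt(G'_j)\le\rho_j\le(\log v_j)^{-c}$. The only difference is in the constants: the paper uses $\log n'_j\le\log8+t+t^2\le9t^2$ with $t=\log v_j\ge\log2$, which holds for every $j$ and gives $K=9^{c/2}$ without your small-$n$ patching. You should also drop the discarded first calibration $\lambda=2\lceil n\exp\sqrt{\log n}\rceil$ from the write-up.
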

\begin{proof}
Put $t=\log v_j$, and recall that $v_j\geq2$ and $m_j\geq1$ (Section~\ref{sec:growth}). Subdivide every edge of $\mathcal S_j$ into a path of
\begin{equation}\label{eq:nearlinearlength}
 \lambda=2\left\lceil e^{t^2}\right\rceil
\end{equation}
unit edges. By Lemma~\ref{lem:subdivision}, the result satisfies items~1, 3, 4, and~5 exactly as in the proof of Theorem~\ref{thm:main}, every cycle has at least $\lambda$ edges, and the routing ceiling is still $\rho_j$.

It remains to compare $\lambda$ and $\rho_j$ with the new vertex count $n'_j=v_j+(\lambda-1)m_j$. Since $m_j\geq1$ and $2m_j\leq3v_j$,
\begin{equation}\label{eq:nearlinearsize}
 \lambda\leq n'_j\leq2v_j\lambda,
\end{equation}
and the choice~\eqref{eq:nearlinearlength} gives $e^{t^2}\leq\lambda\leq4e^{t^2}$. The lower bounds give $n'_j\geq e^{t^2}$, that is, $t\leq\sqrt{\log n'_j}$, and therefore
\begin{equation}\label{eq:nearlineargirth}
 \lambda\geq\frac{n'_j}{2v_j}\geq\frac{n'_j}{2\exp\sqrt{\log n'_j}},
\end{equation}
which is the girth bound. The upper bounds give $n'_j\leq8v_je^{t^2}$, so
\begin{equation}\label{eq:nearlinearlog}
 \log n'_j\leq\log8+t+t^2\leq9t^2,
\end{equation}
using $t\geq\log2$. Put $c'=c/2$. Raising~\eqref{eq:nearlinearlog} to the power $c'$ and using $\rho_jt^{c}\leq1$, which is~\eqref{eq:recurrences} together with~\eqref{eq:logpower}, gives
\begin{equation}\label{eq:nearlinearrouting}
 \begin{split}
 R_\rt(G'_j)(\log n'_j)^{c'}
 &\leq\rho_j(\log n'_j)^{c'}\\
 &\leq9^{c/2}\rho_jt^{c}
 \leq9^{c/2}=:K.
 \end{split}
\end{equation}
Finally, the bound~\eqref{eq:nearlineargirth} is at least $(n'_j)^\beta$ as soon as $(1-\beta)\log n'_j\geq\log2+\sqrt{\log n'_j}$, which holds for all large $j$ because $n'_j\geq v_j>j$.
\end{proof}

The threshold in the last statement cannot be removed. A cycle avoids the $2k_j$ leaf terminals, so it has at most $n'_j-2k_j$ edges, and no fixed network with a cycle has girth at least $n^\beta$ for every $\beta<1$.

\section{Concluding Remarks}
This paper settles a question that has stood since 2004: in undirected networks with multiple unicast sessions, network coding can beat fractional routing. The counterexample is a finite network built from a reversible circuit in which every register, workspace included, carries its own unicast demand. The total temporal length of the wires then equals the total baseline distance of the demands exactly, and the designated demands, whose forward routes cannot avoid the cut set and whose other routes pay for backward steps, tip the balance against routing; the code uses each wire once and pipelines toward rate one.

We were struck by how far this one idea carries. The same integer instructions work over every finite field and every nontrivial finite abelian group. The routing ceiling is squared by each tensor step and preserved by subdivision, which together give, for every $0<\beta<1$, unbounded families of subcubic bipartite networks with girth at least $n^\beta$ and a polylogarithmic coding advantage, and a single family with girth $n^{1-o(1)}$. The finite counterexample and the family theorem are formalized in Lean, relying on its standard axioms only.

Much remains open: the true advantage of the finite instance, whose routing rate our certificate bounds only from above; whether the $O(\log k)$ upper bound on the coding advantage is attained; and whether a small counterexample exists. Coding, it turns out, does help in undirected networks; how much it can help remains to be seen.

\appendices
\section{Finite Colored Covers}\label{app:covers}
The covers of Definition~\ref{def:cover} are finite incidence graphs used in both the auxiliary recursion and the scalar tensor construction. The proof below establishes their existence and size bound independently of either application.

\begin{proof}[Proof of Lemma~\ref{lem:cover}]
Write $\mu=\max\{d,k\}$. Use left vertices $(i,x)\in[k]\times[\nu]$ and right vertices $(a,y)\in[d]\times[\nu]$. For each pair $(i,a)$ choose a permutation $\sigma_{i,a}$ of $[\nu]$, and join $(i,x)$ to $(a,\sigma_{i,a}(x))$. Every choice gives a simple colored cover of maximum degree $\mu$.

The space of choices is finite. Among them maximize the girth truncated at $g+1$; among choices attaining that maximum, minimize the number of cycles of minimum length. Suppose the selected graph has shortest cycle length $r\leq g$, and choose an edge $uv$ on such a cycle. It belongs to one $(i,a)$ matching.

A radius-$g$ ball in a maximum-degree-$\mu$ graph has at most $(\mu+1)^g$ vertices. Indeed, starting from one vertex, each successive closed neighborhood increases the size by at most a factor $\mu+1$. The union of the balls around $u,v$ therefore has at most $2(\mu+1)^g<\nu$ vertices. The chosen matching has $\nu$ disjoint edges, so some edge $xy$ of the same matching has both endpoints outside this union. All cross-distances between $\{u,v\}$ and $\{x,y\}$ exceed $g$.

Switch $uv,xy$ to $uy,xv$, preserving the color matching. No cycle of length at most $r$ can use a new edge. A cycle using exactly one new edge would leave an old cross-path of length at most $r-1\leq g$. If it uses both, deleting those edges leaves two old paths. One possible pairing contains a cross-path, again impossible. The other gives alternate paths around the old edges $uv$ and $xy$. Each has length at least $r-1$, because adding its missing old edge creates an old cycle. The new cycle would then have length at least $2(r-1)+2=2r>r$.

Thus every cycle of length at most $r$ after switching uses only retained old edges, while the selected $r$-cycle is gone. Either the truncated girth increases or the number of cycles of length $r$ decreases, contradicting the extremal choice. There is no cycle of length at most $g$.

Finally take a nonempty connected component. Every edge incident to one of its vertices stays in the component, so all local matching conditions survive. No new cycle appears, and neither side grows. This proves~\eqref{eq:coverbound} and the required girth bound.
\end{proof}

\section{The Recursive Auxiliary Input}\label{app:input}
This appendix proves Lemma~\ref{lem:input}. It uses the base and recursive substitution of HWZ~\cite[Section 3]{HWZ2020}, retaining the integer decoding identities and exact counts needed by the reversible compiler. The distance proof is in Appendix~\ref{app:excursions}.

\subsection{Base graph and exact local formulas}
For $u\geq5$, put $q=\lceil u\rceil$. Use distinct vertices
\[
 v,w,s_1,\ldots,s_q,t_1,\ldots,t_q
\]
and, for every ordered pair $i\ne j$, two further private vertices $x_{ij},y_{ij}$. Add the edges $s_iv$, $vw$, $wt_i$ and the private paths
\[
 s_i,x_{ij},y_{ij},t_j\qquad(i\ne j).
\]
With $F=\{vw\}$, the parameters are
\[
 k=q,\quad f=1,\quad m=2q+1+3q(q-1)=3q^2-q+1.
\]
The graph is finite, connected, and loopless, and its terminal vertices are all distinct.

At round zero, $s_i$ sends $X_i$ to $v$ and all $x_{ij}$. At round one, $v$ sends $\sum_iX_i$ to $w$, and $x_{ij}$ forwards $X_i$ to $y_{ij}$. At round two, $w$ sends the sum to every sink, and $y_{ij}$ forwards $X_i$ to $t_j$. Each operation uses only initial local data or strictly earlier receptions. Formula~\eqref{eq:basecode} is an integer decoder at every sink, so the protocol is exact over arbitrary additive abelian groups.

After deleting $vw$, the source star and sink star communicate only through the private side paths. A simple source-to-sink path traverses an odd number of side paths. Three or more cost at least nine edges. If there is one, it cannot run from $s_i$ directly to $t_i$, since that side path was omitted. Reaching the side path at a different source or leaving it at a different sink requires two star edges, so the total is at least five. Taking any $j\ne i$, the path
\[
 s_i,v,s_j,x_{ji},y_{ji},t_i
\]
has length five. Thus the deleted own-pair distance is exactly five. Cycle erasure extends the same lower bound to all walks.

\subsection{The recursive physical product}
Take outer and inner inputs with respective parameters
\[
 (a_1,b_1,k_1,f_1,m_1),\qquad
 (a_2,b_2,k_2,f_2,m_2),
\]
where $a_i$ is a full-duplex code duration and $b_i$ is a deleted own-pair distance bound. Suppose $m_1>f_1$. Index both orientations of each noncut outer edge by $d=2(m_1-f_1)$ colors. By Lemma~\ref{lem:cover}, choose a connected complementary colored cover with parameters $(d,k_2)$ and girth at least $2b_1b_2$. Write its side sizes as $n_L,n_R$; then
\begin{equation}\label{eq:inputincidence}
 n_Ld=n_Rk_2.
\end{equation}

Take an outer copy per left vertex and an inner copy per right vertex. For an incidence with colors $(e,i)$, identify the tail and head of outer directed arc $e$ with the source and sink of inner session $i$. Delete every noncut outer edge. Replace each cut outer edge by a private path of exactly $a_2$ edges. The demands are the copied outer pairs. The new cut set contains the first edge of each cut replacement and every cut edge in every inner copy. Consequently
\begin{equation}\label{eq:inputcounts}
 \begin{split}
 k_+&=n_Lk_1,\\
 f_+&=n_Lf_1+n_Rf_2,\\
 m_+&=a_2n_Lf_1+n_Rm_2.
 \end{split}
\end{equation}

There is no hidden identification between distinct outer vertices. To check this, first replace each identification by a zero-length connector. Each inner terminal has one connector, so every connector component is a star around one outer vertex, or an isolated inner nonterminal. Different terminals in one inner copy attach through different incidence edges, which have different left endpoints because the cover is simple. The two terminals attached through one incidence go to the two distinct endpoints of a loopless outer edge. It follows that the inner-copy maps are injective as well. Retained edges stay loopless. Parallel edge identities, if any, are kept distinct.

Connectivity follows from the same description. The ends of each removed outer edge remain connected through an assigned connected inner copy, and each cut replacement is a path. Thus the vertices of each outer copy remain mutually connected. Each inner copy meets all its incident outer copies, and the cover is connected. Its internal vertices and all replacement-path vertices belong to the resulting connected graph.

\subsection{Causal integer composition}
For each outer round, compute the outgoing payload on every directed noncut edge from the outer history available before that round. Assign the zero payload to a direction with no scheduled transmission in that round. Assign this payload to the corresponding inner source, which has been identified with that outer sender. Execute all inner copies for $a_2$ rounds. During the same interval, relay each cut-edge payload along its private $a_2$-edge path. The full-duplex convention allows the two opposite cut payloads to advance simultaneously on a segment.

By the end of the interval, every outer receiver has its intended payload. An inner decoder is an identity for all values of its inputs, so substituting correlated integer forms in the original messages is valid. Locally, substitute each assigned payload expression in the corresponding inner source term. That payload is known at the identified sender when the interval begins. Inner predecessor terms refer to earlier rounds of the current inner execution; outer-history terms refer to completed earlier intervals. Thus every transmission has a local strictly causal integer expression.

The edges of different inner copies and cut replacement paths are disjoint as physical identities, so the full-duplex budget is respected. Induction on outer rounds proves exactness. The code duration is $a_+=a_1a_2$.

After deleting the new cut set, the remaining parts of the cut replacements are stubs attached to at most one outer vertex. A walk between old terminals can remove its visits to these stubs. The remaining core is exactly the attachment construction of Appendix~\ref{app:excursions}, using the inner graphs with their cut sets deleted and the noncut outer edges. That argument proves
\begin{equation}\label{eq:inputdistanceproduct}
 b_+=b_1b_2.
\end{equation}
Here $b_+$ is a lower bound, not necessarily the exact distance.

\subsection{Maintaining the session-to-cut ratio}
Write $R=k/f$ and $U=m/f$. The base input has $R\geq u$ and $U\geq3$. Assume inputs exist at level $h$ for every rational target at least five. To produce level $h+1$ with target $u$, choose the outer input at target $3u$ and then the inner input at target $\max\{5,U_1\}$. Their code times are both $3^{2^h}$ and their distance lower bounds both $5^{2^h}$.

Using~\eqref{eq:inputincidence} and~\eqref{eq:inputcounts},
\begin{equation}\label{eq:inputratio}
 R_+=\frac{R_1}{1+2(U_1-1)/R_2}
 \geq\frac{3u}{1+2U_1/R_2}\geq u.
\end{equation}
Moreover,
\[
 U_+=\frac{a_2n_Lf_1+U_2n_Rf_2}{n_Lf_1+n_Rf_2}
\]
is a weighted average of $a_2$ and $U_2$, both at least three. Thus $U_+\geq3$, ensuring that the next product has positive $d$. Its cut set is nonempty. The duration and distance bound square at each step, yielding $3^{2^{h+1}}$ and $5^{2^{h+1}}$.

\subsection{One final serialization}
The recursion above stays in the full-duplex convention. Serialize only after it is complete. Choose one reference direction for each physical edge. Replace round $t$ by subrounds $2t,2t+1$, assigned respectively to the reference and reverse directions. Both payloads are computed from the history available before $2t$. An original predecessor at $t'<t$ is transmitted by $2t'+1<2t$, so all local expressions remain strictly causal. The resulting code has duration $2\cdot3^{2^h}$ and uses at most one symbol in total on each edge per subround. This proves Lemma~\ref{lem:input}.

\section{Undirected Excursions in Graph Products}\label{app:excursions}
This appendix gives the distance argument shared by the tensor product and the auxiliary input recursion. It is useful to keep the incidence graph separate from the physical graph: incidence edges describe attachments, whereas only inner physical edges count toward a route's hop length.

\subsection{Lifting a quotient path}
Start with disjoint outer vertex sets and inner graphs. Every incidence attaches one inner session to the endpoints of one loopless outer edge or arc, by two zero-length connectors. Each inner terminal has exactly one connector; inner nonterminals have none. After deleting the outer edges, quotienting the connectors gives the physical construction used in the paper.

Let an arbitrary physical walk join one own-pair in a fixed outer copy. Lift each visit to an identified outer vertex through its connector star, obtaining a walk in the expanded graph. Connectors have length zero and physical inner edges length one. Erase repeated-vertex portions. The result is a simple expanded path with no more physical edges than the original walk, and with the same two outer endpoints. Call its physical length $p$.

Map each outer vertex to its outer-copy label, and each inner vertex to its inner-copy label. A physical step remains at one inner label; a connector step traverses an incidence edge. Removing the stationary steps gives the incidence trace, a closed walk in the cover based at the outer copy containing the endpoints, which we call the root. Let $\kappa$ be its number of connector crossings.

Each visit from an outer copy into an inner copy and back uses a physical edge between the two connector crossings. Otherwise the simple path would leave the same inner terminal through its unique connector, immediately revisiting a vertex. The alternating pairs of crossings have disjoint intervening path portions, so
\begin{equation}\label{eq:crossings}
 \kappa\leq2p.
\end{equation}

\subsection{A cyclic incidence trace}
If the support of the incidence trace contains a graph-theoretic cycle, that cycle has length at least the incidence girth $2b_1b_2$. Its distinct edges all appear among the trace steps. Therefore $\kappa\geq2b_1b_2$, and~\eqref{eq:crossings} yields $p\geq b_1b_2$.

This case permits arbitrary backtracking and repeated copy visits in the incidence trace. The only simplicity used is that of the expanded physical path, which was obtained by cycle erasure before tracing.

\subsection{A tree-supported incidence trace}
Suppose the trace support is a tree and root it at the outer copy containing the endpoints. A closed walk in a tree crosses each used edge an even positive number of times. Every incidence edge has exactly two corresponding connectors in the expanded graph. A simple expanded path uses either connector at most once. It follows that each used incidence edge is crossed exactly twice, using its two different connectors.

In a rooted tree, these paired crossings are nested. For an edge from a parent copy to a child copy, the portion between its first and second crossing stays entirely in the child's subtree. Otherwise it would have to cross that edge again or create an incidence cycle. Call that portion the child excursion, including its two boundary connectors.

An excursion into an inner copy begins and ends at the two terminals of the session assigned to its parent incidence. An excursion into a nonroot outer copy begins and ends at the two different endpoints of the outer edge assigned to its parent incidence. The latter endpoints need not form a demanded pair; the proof uses only that they are distinct.

We show that every inner excursion costs at least $b_2$ physical edges, by induction on its finite expanded-path length. If it contains no child outer excursion, its physical part is a walk wholly inside its own inner graph between an own-pair, so the inner distance bound applies. Otherwise it contains an outer child excursion between distinct outer vertices. There are no physical edges left inside an outer copy. To connect its distinct endpoints, this child must contain an inner grandchild excursion. The grandchild is a strictly shorter nested path portion. By induction it costs at least $b_2$, and its physical edges are among those of the original inner excursion. This proves the bound without applying an own-pair assumption to the nonroot outer endpoints.

At the root outer copy, the expanded path consists of maximal inner excursions joined at root vertices. Each such excursion returns along the same incidence edge through its other connector. Its endpoints are precisely the two endpoints of that incidence's outer edge or arc. Contracting it to the corresponding undirected outer edge produces a walk from the root source to its own sink. The outer distance bound requires at least $b_1$ such edges. Maximal root excursions occupy disjoint portions of the simple path, and each costs at least $b_2$, so $p\geq b_1b_2$.

\subsection{Application with deleted edges}
For the scalar tensor of Section~\ref{sec:tensor}, use the full inner graphs and all outer edges. For the auxiliary recursion, first remove the new cut set. A cut replacement has lost its first edge, leaving only isolated vertices or a path attached to one outer endpoint. Deleting excursions into these stubs does not change the outer endpoints and can only shorten a walk. The inner pieces that remain are copies of the inner graph with its cut set deleted, and every root contraction uses a noncut outer edge. Thus the preceding proof applies with the deleted own-pair distances $b_1,b_2$.

The proof begins with an arbitrary undirected walk in either case. Both forward and reverse uses of any physical edge are covered, including routes that pass through inner copies in a way unrelated to the coding schedule.

\section{Organization of the Lean Proof}\label{app:formalization}
The formalization follows the mathematical construction. It represents a finite network together with its protocol and proves the properties needed by the next construction. The proof preserves the meaning of both communication models through the seed construction, tensor iteration, and subdivision.

\subsection{Representing the network and the two communication models}
A physical network is represented by finite vertex and edge sets $V,E$, with two distinct endpoints assigned to every edge. The endpoint labels provide a reference orientation; walks may traverse an edge in either direction. Keeping $E$ as a separate set is important in the auxiliary multigraphs, where two edges can join the same vertices but have separate capacity budgets. Simplicity is an additional property asserting that distinct physical edges have distinct unordered endpoint pairs. Connectivity, degree, terminal leaves, and cycle lengths are proved for the graph defined by these endpoints.

A finite coding protocol has a finite set of transmission events, each with a round, a physical edge, and a sending endpoint, together with the encoding data described next. Its capacity obligation states that two different events cannot occupy the same physical edge and round. The direction does not enter this condition, so the two directions of an edge share its budget. The scalar instances satisfy the stronger property that every physical edge occurs exactly once among all events. That stronger property is what the pipeline proof uses when several scalar executions overlap.

For an integer protocol, each payload is represented by its coefficient row in the original input coordinates. A row gives the symbol's global value, but it does not by itself show that the sender can compute it. For each event the formalization therefore records a local expression and establishes three facts about it: every source coordinate used is located at the sender; every received symbol used arrived at that sender in a strictly earlier round; and the resulting integer row equals the payload row. A decoder has corresponding local-source and received-symbol obligations, together with the identity that its output row is the desired input-coordinate row.

These identities are proved over $\mathbb Z$. Evaluating a row on group-valued inputs means taking the integer-weighted sum of the input coordinates. Evaluation preserves addition and integer multiplication, so the local row identities become valid local symbol operations in every additive abelian group. Over a field, the same evaluation is a linear map over that field. The proof checks the linearity of the local encoders and decoders as functions of their available data, as required by Section~\ref{sec:model}.

The upper-bound coding model separately permits arbitrary deterministic local functions. Their arguments contain only source coordinates present at the sender and symbols received there before the event. Thus equality of these local views implies equality of the next transmitted symbol. This is the induction principle used in the source-leaf transcript argument. The finite alphabet is needed when injective decoding turns a bound on transcript length into the inequality $B_i\leq D$.

Fractional routing is represented independently of the coding schedule. For each session, a real-valued function assigns a weight to every simple undirected source--sink path. The definition requires nonnegative weights, sufficient total weight for each demand, and the shared physical-edge load bound~\eqref{eq:routing}. Paths are sequences of physical edges, so parallel edges remain distinct. The proof of cycle erasure preserves each demand and does not increase any edge's traversal count; the set of all simple paths is proved finite, so sums over it are well defined. The metric theorem is then the finite-sum inequality~\eqref{eq:lengthbound}, applied to these path weights.

\subsection{Constructing the seed and its certificate}
The initial construction produces the auxiliary graph and its integer code together. In the base graph, the source coefficients and decoder rows implement~\eqref{eq:basecode}. For the recursive product, the inner code is instantiated on the outer payload rows. A substitution lemma proves that decoding identities remain true when the inner inputs are correlated. The same lemma preserves the sender locations, receiver locations, and strict time offsets, so a correct algebraic substitution also gives a legal local protocol.

The reversible compiler is verified in two stages. First, induction over the causal schedule proves the clean-input identity $LUJ=\Id_k$. Second, the cancellation calculation in Section~\ref{sec:compiler} is proved as an identity on the full register space. Expanding the maps into integer elementary gates gives the signed exchange~\eqref{eq:signedswap}, including preservation of every arbitrary workspace coordinate. The compiler also records the host of every register and the physical edge of every cross gate. Reversal, transposition, and coefficient negation preserve the cross-phase restrictions used in the routing argument.

The graph of register strands is constructed from the gate word. Splitting its gates yields the subcubic graph with explicit endpoints, and a chronological row evaluation proves that the code transmitted on those edges is the circuit computation. The temporal metric proof covers all undirected walks through this graph. It establishes the baseline equality, the marked-port budget, and the extra distance paid by each designated demand. These facts yield the strict integer surplus~\eqref{eq:strictmetric}.

The subsequent transformations retain both a code and a distance certificate. Positive metric repair uses the integrality of the surplus. Component selection uses a sum over components to find one with positive surplus; exact decoding proves that retained source and sink endpoints lie together. Source-leaf padding equalizes the distance lower bounds. Variable subdivision supplies private replacement paths, a relay schedule, and the resulting unweighted hop bound. A relabeling of vertices, edges, and terminals then presents this graph as a scalar instance in the sense of Definition~\ref{def:scalar}, with its physical properties unchanged. The seed therefore comes with the strict inequality $m_0<k_0b_0$. The block pipeline turns this integer surplus into the finite rate comparison~\eqref{eq:finitestrict}; evaluating the local rows in $\mathbb F_2$ gives the finite counterexample.

Some finite choices, such as the component or the cover, are made by classical existence arguments; their cardinalities and durations are derived from the construction.

\subsection{Composing the amplified family}
The tensor operation takes two scalar instances and constructs a third. Its vertex set consists of copied outer vertices and private inner nonterminals; its edge set consists of copied inner physical edges. With this representation, simplicity, connectivity, degree preservation, and leaf-terminal preservation are proved separately. The code proof assigns an inner execution to each outer edge event and verifies local availability at the chosen time offsets. The distance proof translates undirected walks into the incidence excursions of Appendix~\ref{app:excursions}.

The colored cover is also proved to exist within the development. The argument optimizes over a finite space of color-preserving matchings by an extremal choice. A switching lemma removes a shortest cycle without introducing an equally short new cycle, and a neighborhood bound guarantees a distant edge for the switch. These facts establish both the required girth and the numerical side-size bound. The tensor proof can therefore use the cover together with its proved size bound at every recursive step.

Iteration supplies scalar instances, not only a recurrence for their proposed parameters. The exact edge and session counts prove that the certified ceiling squares. The cover bound gives the edge-growth inequality, from which the logarithmic exponent is derived. Uniform subdivision is likewise an operation on a graph together with its code. Its endpoint construction proves bipartiteness for even replacement length and the lower bound on every physical cycle. The final real inequalities choose the replacement length and transfer the logarithmic bound to the new vertex count.

The coding-capacity and vector-region results are attached to every scalar instance. For each positive block length, the scalar protocol yields a block protocol with one-round start spacing. In the formalized family theorem this is recorded in threshold form: for every rate below one there is a block length whose protocol exceeds it; the protocol for each individual block length is proved separately. The finite-block separation is also proved directly. For every network in the family there is a block length whose protocol has common rate strictly above every feasible fractional routing rate, which is item~5 of Theorem~\ref{thm:main}. For the seed network, the integer protocol at such a block length, interpreted over the two-element field, is a linear causal code, which is Theorem~\ref{thm:finite}. The exact finite-block region of Proposition~\ref{prop:exactregion} and the near-linear-girth family of Corollary~\ref{cor:nearlinear} are proved in an extension of the development that is audited in the same way. The source-leaf proof bounds all deterministic symbol rates by one. For any target vector in the unit cube and any positive error tolerance, coordinate restriction and~\eqref{eq:vectorerror} produce a finite protocol within that tolerance. These lower and upper bounds prove the least-upper-bound and closure statements used by the family theorem.

The final assembly first obtains the integer seed and its strict ceiling, then fixes the positive logarithmic exponent. For each $\beta\in(0,1)$, it constructs the iterated and subdivided family and its prefactor. Only after these graph and protocol choices does it interpret the integer instructions over a group or field. This order is the source of the simultaneous alphabet quantifiers in Theorem~\ref{thm:main}.

\subsection{What the verification establishes}
Lean checks that each constructed object satisfies its declared obligations and that the final theorem follows from them. Its kernel checks proof terms generated by the development, including the terms produced with arithmetic and algebraic tactics. The checked theorem is relative to the formal definitions. Their interpretation as the communication model in Section~\ref{sec:model} is the mathematical correspondence established by the representations and operational arguments above.

The development builds without errors. An audit of theorem dependencies over its 2,472 declarations and 1,755 theorems shows that the proofs use only Lean's standard axioms (propositional extensionality, classical choice, and quotient soundness), with no further axioms and no incomplete proofs. An independent re-check of all logical declarations in a fresh kernel environment also succeeded; it uses the same Lean kernel rather than an independent verifier.

\begingroup
\interlinepenalty=10000
\bibliography{references}

@IEEEtranBSTCTL{IEEEexample:BSTcontrol,
  CTLdash_repeated_names = "no"
}

@inproceedings{LiLi2004,
  author = {Zongpeng Li and Baochun Li},
  title = {Network Coding: The Case of Multiple Unicast Sessions},
  booktitle = {Proceedings of the 42nd Annual Allerton Conference on Communication, Control, and Computing},
  year = {2004}
}

@inproceedings{HWZ2020,
  author = {Bernhard Haeupler and David Wajc and Goran Zuzic},
  title = {Network Coding Gaps for Completion Times of Multiple Unicasts},
  booktitle = {Proceedings of the 61st IEEE Annual Symposium on Foundations of Computer Science (FOCS)},
  year = {2020},
  pages = {494--505},
  publisher = {IEEE},
  doi = {10.1109/FOCS46700.2020.00053}
}

@misc{Liu2026,
  author = {Sirui Liu and Li Que and Zongpeng Li and Baochun Li},
  title = {On the Multiple-Unicast Conjecture: Beyond Cut Metrics},
  year = {2026},
  howpublished = {arXiv:2608.06070}
}

@techreport{HKL2004,
 author = {Nicholas J. A. Harvey and Robert D. Kleinberg and April {Rasala Lehman}},
 title = {Comparing Network Coding with Multicommodity Flow for the $k$-Pairs Communication Problem},
 institution = {Massachusetts Institute of Technology},
 year = {2004},
 month = nov,
 number = {MIT-LCS-TR-964}
}

@inproceedings{BGS2017,
 author = {Mark Braverman and Sumegha Garg and Ariel Schvartzman},
 title = {Coding in Undirected Graphs Is Either Very Helpful or Not Helpful at All},
 booktitle = {8th Innovations in Theoretical Computer Science Conference (ITCS 2017)},
 series = {Leibniz International Proceedings in Informatics (LIPIcs)},
 volume = {67},
 pages = {18:1--18:18},
 year = {2017},
 publisher = {Schloss Dagstuhl -- Leibniz-Zentrum f{\"u}r Informatik},
 doi = {10.4230/LIPIcs.ITCS.2017.18}
}

@article{HKL2006,
  author = {Nicholas J. A. Harvey and Robert D. Kleinberg and April {Rasala Lehman}},
  title = {On the Capacity of Information Networks},
  journal = {IEEE Transactions on Information Theory},
  volume = {52},
  number = {6},
  pages = {2345--2364},
  year = {2006},
  month = jun,
  doi = {10.1109/TIT.2006.874531}
}

@article{JVY2006,
  author = {Kamal Jain and Vijay V. Vazirani and Gideon Yuval},
  title = {On the Capacity of Multiple Unicast Sessions in Undirected Graphs},
  journal = {IEEE Transactions on Information Theory},
  volume = {52},
  number = {6},
  pages = {2805--2809},
  year = {2006},
  month = jun,
  doi = {10.1109/TIT.2006.874543}
}

@inproceedings{AY2008,
  author = {Ali Al-Bashabsheh and Abbas Yongacoglu},
  title = {On the $k$-Pairs Problem},
  booktitle = {Proceedings of the IEEE International Symposium on Information Theory (ISIT)},
  year = {2008},
  pages = {1828--1832},
  doi = {10.1109/ISIT.2008.4595304}
}

@inproceedings{LM2009,
  author = {Michael Langberg and Muriel M{\'e}dard},
  title = {On the Multiple Unicast Network Coding Conjecture},
  booktitle = {Proceedings of the 47th Annual Allerton Conference on Communication, Control, and Computing},
  year = {2009},
  pages = {222--227},
  doi = {10.1109/ALLERTON.2009.5394800}
}

@article{Xiahou2014,
  author = {Tang Xiahou and Zongpeng Li and Chuan Wu and Jiaqing Huang},
  title = {A Geometric Perspective to Multiple-Unicast Network Coding},
  journal = {IEEE Transactions on Information Theory},
  volume = {60},
  number = {5},
  pages = {2884--2895},
  year = {2014},
  month = may,
  doi = {10.1109/TIT.2014.2308998}
}

@inproceedings{TQ2019,
  author = {Satyajit Thakor and Mohammad Ishtiyaq Qureshi},
  title = {Undirected Unicast Network Capacity: A Partition Bound},
  booktitle = {Proceedings of the IEEE International Symposium on Information Theory (ISIT)},
  year = {2019},
  pages = {1--5},
  doi = {10.1109/ISIT.2019.8849499}
}

@article{Hu1963,
  author = {T. C. Hu},
  title = {Multi-Commodity Network Flows},
  journal = {Operations Research},
  volume = {11},
  number = {3},
  pages = {344--360},
  year = {1963},
  doi = {10.1287/opre.11.3.344}
}

@inproceedings{Liu2026sif,
  author = {Sirui Liu and Zongpeng Li and Xiying Fan and Haifeng Chen},
  title = {A Session Interaction Framework for the Multiple-Unicast Conjecture},
  booktitle = {Proceedings of the IEEE International Symposium on Information Theory (ISIT)},
  year = {2026},
  pages = {1--6},
  doi = {10.1109/ISIT62367.2026.11653955}
}

@article{LLR1995,
  author = {Nathan Linial and Eran London and Yuri Rabinovich},
  title = {The Geometry of Graphs and Some of Its Algorithmic Applications},
  journal = {Combinatorica},
  volume = {15},
  number = {2},
  pages = {215--245},
  year = {1995},
  doi = {10.1007/BF01200757}
}

@article{AR1998,
  author = {Yonatan Aumann and Yuval Rabani},
  title = {An {$O(\log k)$} Approximate Min-Cut Max-Flow Theorem and Approximation Algorithm},
  journal = {SIAM Journal on Computing},
  volume = {27},
  number = {1},
  pages = {291--301},
  year = {1998},
  doi = {10.1137/S0097539794285983}
}

@inproceedings{Lean4,
  author = {Leonardo de Moura and Sebastian Ullrich},
  title = {The {Lean 4} Theorem Prover and Programming Language},
  booktitle = {Automated Deduction---CADE 28},
  series = {Lecture Notes in Computer Science},
  volume = {12699},
  pages = {625--635},
  year = {2021},
  publisher = {Springer},
  doi = {10.1007/978-3-030-79876-5_37}
}

@inproceedings{Mathlib2020,
  author = {{The mathlib Community}},
  title = {The {Lean} Mathematical Library},
  booktitle = {Proceedings of the 9th ACM SIGPLAN International Conference on Certified Programs and Proofs (CPP 2020)},
  pages = {367--381},
  year = {2020},
  publisher = {ACM},
  doi = {10.1145/3372885.3373824}
}

@article{ACLY2000,
  author = {Rudolf Ahlswede and Ning Cai and Shuo-Yen Robert Li and Raymond W. Yeung},
  title = {Network Information Flow},
  journal = {IEEE Transactions on Information Theory},
  volume = {46},
  number = {4},
  pages = {1204--1216},
  year = {2000},
  month = jul,
  doi = {10.1109/18.850663}
}

@inproceedings{LiLiCISS2004,
  author = {Zongpeng Li and Baochun Li},
  title = {Network Coding in Undirected Networks},
  booktitle = {Proceedings of the 38th Annual Conference on Information Sciences and Systems (CISS)},
  year = {2004},
  month = mar
}

@article{OS1981,
  author = {Haruko Okamura and Paul D. Seymour},
  title = {Multicommodity Flows in Planar Graphs},
  journal = {Journal of Combinatorial Theory, Series B},
  volume = {31},
  number = {1},
  pages = {75--81},
  year = {1981},
  doi = {10.1016/S0095-8956(81)80012-3}
}

@article{KS2006,
  author = {Gerhard Kramer and Serap A. Savari},
  title = {Edge-Cut Bounds on Network Coding Rates},
  journal = {Journal of Network and Systems Management},
  volume = {14},
  number = {1},
  pages = {49--67},
  year = {2006},
  month = mar,
  doi = {10.1007/s10922-005-9019-0}
}

@article{Yin2018,
  author = {Xunrui Yin and Zongpeng Li and Yaduo Liu and Xin Wang},
  title = {A Reduction Approach to the Multiple-Unicast Conjecture in Network Coding},
  journal = {IEEE Transactions on Information Theory},
  volume = {64},
  number = {6},
  pages = {4530--4539},
  year = {2018},
  month = jun,
  doi = {10.1109/TIT.2017.2771738}
}

@article{QT2022,
  author = {Mohammad Ishtiyaq Qureshi and Satyajit Thakor},
  title = {A Bound on Undirected Multiple-Unicast Network Information Flow},
  journal = {IEEE Transactions on Information Theory},
  volume = {68},
  number = {7},
  pages = {4453--4469},
  year = {2022},
  month = jul,
  doi = {10.1109/TIT.2022.3157623}
}

@inproceedings{LiuINFOCOM2026,
  author = {Sirui Liu and Yeqiao Hou and Jessie Hui Wang and Zongpeng Li},
  title = {On the Multiple-Unicast Conjecture: Session Dominance},
  booktitle = {Proceedings of IEEE INFOCOM},
  year = {2026},
  doi = {10.1109/INFOCOM59046.2026.11571518}
}

@article{Bennett1973,
  author = {Charles H. Bennett},
  title = {Logical Reversibility of Computation},
  journal = {IBM Journal of Research and Development},
  volume = {17},
  number = {6},
  pages = {525--532},
  year = {1973},
  month = nov,
  doi = {10.1147/rd.176.0525}
}

@inproceedings{FHLS2019,
  author = {Alireza Farhadi and MohammadTaghi Hajiaghayi and Kasper Green Larsen and Elaine Shi},
  title = {Lower Bounds for External Memory Integer Sorting via Network Coding},
  booktitle = {Proceedings of the 51st Annual ACM SIGACT Symposium on Theory of Computing (STOC)},
  year = {2019},
  pages = {997--1008},
  doi = {10.1145/3313276.3316337}
}

@inproceedings{AFKL2019,
  author = {Peyman Afshani and Casper Benjamin Freksen and Lior Kamma and Kasper Green Larsen},
  title = {Lower Bounds for Multiplication via Network Coding},
  booktitle = {Proceedings of the 46th International Colloquium on Automata, Languages, and Programming (ICALP)},
  series = {Leibniz International Proceedings in Informatics (LIPIcs)},
  volume = {132},
  publisher = {Schloss Dagstuhl -- Leibniz-Zentrum f{\"u}r Informatik},
  pages = {10:1--10:12},
  year = {2019},
  doi = {10.4230/LIPIcs.ICALP.2019.10}
}

@inproceedings{BH2025,
  author = {Mark Braverman and Zhongtian He},
  title = {Undirected Multicast Network Coding Gaps via Locally Decodable Codes},
  booktitle = {Proceedings of the 66th IEEE Annual Symposium on Foundations of Computer Science (FOCS)},
  year = {2025}
}
\endgroup
\end{document}